%
%
%
%
%
%
%
\documentclass[%
 reprint,
 superscriptaddress,
 amsmath,amssymb,
 aps,
pra,
]{revtex4-2}

\bibliographystyle{apsrev4-2}


\usepackage[utf8]{inputenc} 
\usepackage{amsmath,amsfonts,amssymb,mathrsfs,amsthm,cases} 
\usepackage{bm, bbm, dsfont} 
\usepackage{physics} 
\usepackage{graphicx} 
\usepackage{diagbox,threeparttable,dashrule,booktabs,dcolumn} 
\usepackage{xcolor} 
\usepackage{url} 
\usepackage{hyperref} 
\hypersetup{colorlinks,linkcolor={blue},citecolor={blue},urlcolor={red}}



\usepackage{cleveref} 


\crefname{equation}{Eq.}{Eqs.}
\crefname{section}{Section}{Sections}
\crefname{definition}{Definition}{Definitions}
\crefname{proposition}{Proposition}{Propositions}
\crefname{lemma}{Lemma}{Lemmas}
\crefname{theorem}{Theorem}{Theorems}
\crefname{corollary}{Corollary}{Corollaries}
\crefname{conjecture}{Conjecture}{Conjectures}
\crefname{claim}{}{Claims}
\crefname{example}{Example}{Examples}


\newtheorem{definition}{Definition}
\newtheorem{proposition}{Proposition}
\newtheorem{lemma}{Lemma}

\newtheorem{theorem}{Theorem}



\newcommand{\proj}[1]{|{#1}\rangle \langle {#1}|}

\newcommand{\I}{\mathds{I}}



\def\bpf{\begin{proof}}
\def\epf{\end{proof}}
\def\bea{\begin{eqnarray}}
\def\eea{\end{eqnarray}}
\def\beq{\begin{equation}}
\def\eeq{\end{equation}}
\def\bal{\begin{aligned}}
\def\eal{\end{aligned}}
\def\bma{\begin{pmatrix}}
\def\ema{\end{pmatrix}}


\def\tr{{\rm Tr}}

\def\dg{\dagger}

\def\ox{\otimes}

\def\lin{\mathop{\rm span}}
\def\supp{\mathop{\rm Supp}}
\def\inv{\mathop{\rm inv}}


\def\a{\alpha}
\def\b{\beta}

\def\e{\epsilon}

\def\t{\theta}

\def\la{\lambda}

\def\x{\xi}

\def\om{\omega}


\newcommand{\nc}{\newcommand}

\nc{\bbA}{\mathbb{A}} \nc{\bbB}{\mathbb{B}} \nc{\bbC}{\mathbb{C}}
\nc{\bbD}{\mathbb{D}} \nc{\bbE}{\mathbb{E}} \nc{\bbF}{\mathbb{F}}
\nc{\bbG}{\mathbb{G}} \nc{\bbH}{\mathbb{H}} \nc{\bbI}{\mathbb{I}}
\nc{\bbJ}{\mathbb{J}} \nc{\bbK}{\mathbb{K}} \nc{\bbL}{\mathbb{L}}
\nc{\bbM}{\mathbb{M}} \nc{\bbN}{\mathbb{N}} \nc{\bbO}{\mathbb{O}}
\nc{\bbP}{\mathbb{P}} \nc{\bbQ}{\mathbb{Q}} \nc{\bbR}{\mathbb{R}}
\nc{\bbS}{\mathbb{S}} \nc{\bbT}{\mathbb{T}} \nc{\bbU}{\mathbb{U}}
\nc{\bbV}{\mathbb{V}} \nc{\bbW}{\mathbb{W}} \nc{\bbX}{\mathbb{X}}
\nc{\bbZ}{\mathbb{Z}}

\nc{\cB}{{\cal B}} \nc{\cH}{{\cal H}} \nc{\cE}{{\cal E}} \nc{\cS}{{\cal S}}
\nc{\cG}{{\cal G}} \nc{\cZ}{{\cal Z}}

\nc{\bo}{{\bf o}}

\nc{\Dp}{\Delta p} \nc{\Dq}{\Delta q}


\def\etal{{\sl et~al.~}}


\begin{document}


\title{Quantum hypothesis testing between qubit states with parity} 

\author{Yi Shen}
\email[]{yishen@jiangnan.edu.cn}
\affiliation{School of Science, Jiangnan University, Wuxi, Jiangsu 214122, China}

\author{Carlo Maria Scandolo}
\email[]{carlomaria.scandolo@ucalgary.ca}
\affiliation{Department of Mathematics and Statistics, University of Calgary, Calgary, AB T2N 1N4, Canada}
\affiliation{Institute for Quantum Science and Technology, University of Calgary, Calgary, AB T2N 1N4, Canada}

\author{Lin Chen}
\email[]{linchen@buaa.edu.cn (corresponding author)}
\affiliation{School of Mathematical Sciences, Beihang University, Beijing 100191, China}
\affiliation{International Research Institute for Multidisciplinary Science, Beihang University, Beijing 100191, China}

\date{\today} 

\begin{abstract}
Quantum hypothesis testing (QHT) provides an effective method to discriminate between two quantum states using a two-outcome positive operator-valued measure (POVM). Two types of decision errors in a QHT can occur. In this paper we focus on the asymmetric setting of QHT, where the two types of decision errors are treated unequally, considering the operational limitations arising from the lack of a reference frame for chirality. This reference frame is associated with the group $\bbZ_2$ consisting of the identity transformation and the parity transformation. Thus, we have to discriminate between two qubit states by performing the $\bbZ_2$-invariant POVMs only. 
We start from the discrimination between two pure states. By solving the specific optimization problem we completely characterize the asymptotic behavior of the minimal probability of type-II error which occurs when the null hypothesis is accepted when it is false. Our results reveal that the minimal probability reduces to zero in a finite number of copies, if the $\bbZ_2$-twirlings of such two pure states are different. We further derive the critical number of copies such that the minimal probability reduces to zero. Finally, we replace one of the two pure states with a maximally mixed state, and similarly characterize the asymptotic behavior of the minimal probability of type-II error.
\end{abstract}


\maketitle


\section{Introduction}
\label{sec:intro}

Hypothesis testing refers to the fundamental procedures for statisticians to accept or reject statistical hypotheses based on observed data taken by a collection of random variables \cite{kendall1994}.
There are two hypotheses in a statistical hypothesis testing. 
One is the null hypothesis, denoted by $H_0$, which assumes that individuals in a population are randomly distributed among the sampling units of a sample. The other one is the alternative hypothesis, denoted by $H_1$, which is opposite to the null hypothesis \cite{2005testing}.
Due to the limited sample data, there would be two types of decision errors resulting from a hypothesis testing. The type-I error occurs when the null hypothesis is rejected when it is true. The type-II error occurs when the null hypothesis is accepted when it is false. The ultimate purpose of hypothesis testing is to formulate an optimal strategy to make a decision with the minimal error probability typically of type-II errors.

Quantum hypothesis testing (QHT) \cite{hp1991} is the counterpart of statistical hypothesis testing in quantum information theory. It provides an effective method to discriminate between two quantum states using a two-outcome positive operator-valued measure (POVM). Assume that a given quantum source prepares $n$ independent copies of a quantum system in either state $\rho_0$ or $\rho_1$. We assign the null hypothesis to $\rho_0^{\ox n}$ and the alternative hypothesis to $\rho_1^{\ox n}$, and perform the binary POVM $\{E_n,\I-E_n\}$ to test the $n$-copy quantum system. If the measurement outcome is associated with $E_n$ or $\I-E_n$, then we respectively determine that the system is prepared in $\rho_0^{\ox n}$ or $\rho_1^{\ox n}$. Similar to the statistical hypothesis testing, two types of decision errors in a QHT would occur \cite{erate2008} as follows.
\begin{itemize}
\item Type-I error: The observer determines that the state is $\rho_1^{\ox n}$, while in reality it is $\rho_0^{\ox n}$.
\item Type-II error: The observer determines that the state is $\rho_0^{\ox n}$, while in reality it is $\rho_1^{\ox n}$.
\end{itemize} 
The type-I error happens with the probability $\a_n:=\tr[\rho_0^{\ox n}(\I-E_n)]$ and the type-II error happens with the probability $\b_n:=\tr[\rho_1^{\ox n}E_n]$.
The discrimination task is to decide which hypothesis is true based on the data drawn from an optimal POVM which leads to the minimal error probability, as detailed below. 

The setting of QHT can be classified as symmetric and asymmetric depending on whether the two types of decision errors are treated equally. In the symmetric setting, the two types of errors are treated equally, where the purpose is to minimize the average (Bayesian) of two error probabilities weighted by the prior probabilities of generating $\rho_0$ and $\rho_1$ \cite{erate2008}.  
That is to find the optimal POVM which minimizes
\beq
\label{eq:symep}
P_{e,n}:=\pi_0\a_n+\pi_1\b_n,
\eeq
where $\pi_0$ and $\pi_1$ denote the prior probabilities with $\pi_0+\pi_1=1$. One can verify that 
\beq
\label{eq:symep-1}
\min\limits_{0\leq E_n\leq I} P_{e,n}=\frac{1}{2}(1-\norm{\pi_0\rho_0^{\ox n}-\pi_1\rho_1^{\ox n}}_1),
\eeq
where $\norm{\cdot}_1$ represents the trace norm.
Further, it has been shown that the asymptotic rate of minimal $P_{e,n}$ is completely characterized by the quantum Chernoff bound \cite{qchernoff2007}. Chernoff \cite{chernoff1952} identified that the minimal average error probability in discriminating two probability distributions decreases exponentially in the number of tests $n$, and the optimal exponent arising in the asymptotic limit is known as the celebrated Chernoff bound. For the minimal average error probability $P_{e,n}$ arising from the QHT, Audenaert \etal~\cite{qchernoff2007} finally figured out its asymptotic rate. That is, $P_{e,n}\sim \exp(-n\x_{QCB})$, where $\x_{QCB}=-\log\min\limits_{0\leq s\leq 1}\tr(\rho_0^s\rho_1^{1-s})$ known as the quantum Chernoff bound. To be consistent with the expressions in Ref. \cite{qchernoff2007} we shall henceforth omit the logarithmic base $2$, unless the bases require to be specified.
In this paper, we are more interested in the asymmetric setting QHT, where the two types of errors cannot be treated equally.
As a result, the purpose of asymmetric QHT is to minimize the probability of type-II error under the condition when the probability of type-I error is bounded by a constant $\e$. In Ref. \cite{hp1991}, Hiai and Petz discovered that for each $\e>0$ there exists a POVM such that the probability of type-II error decreases exponentially in the number of copies $n$: $\beta_n(\e)\sim \exp(-nr)$ with $r\geq D(\rho_0||\rho_1)$ where $D(\rho_0||\rho_1)$ is the quantum relative entropy (or quantum Kullback-Leibler divergence). Ogawa and Nagaoka \cite{qsl2000} further improved the above result and proved that the optimal exponent $r$ arising in the asymptotic limit is exactly the quantum relative entropy. This is the well-known quantum Stein's lemma. Since then the relation between the optimal error exponent and the quantum relative entropy has been further studied, and the quantum Stein's lemma has been generalized to many different situations \cite{Hayashi_2002,Hayashi_2007,erate2008,qstein2010,like2014}.

The above-mentioned are ideal results. Here, we consider the QHT in practical scenarios, where the observers cannot perform all POVMs due to the limitation of a practical setup. It leads to imposing corresponding restrictions on the POVMs. As we know, every restriction on quantum operations defines a quantum resource theory (QRT) by partitioning all states into two groups, one consisting of free states and the other consisting of resourceful states \cite{qrt2019}. Accompanying the set of free states is a collection of free quantum operations leaving the set of free states invariant, and the QRT studies what information processing tasks become possible using the free operations. Based on the QRT and from a practical point of view, it is interesting to ask how the minimal error probability behaves by only performing free POVMs. For example, the QHT with the restriction of local operations and classical communication (LOCC) has been studied in Refs. \cite{restrictedqht2017,restrictedqht2020}. In this paper, we are interested in the restriction of a superselection rule (SSR) \cite{rfgour2008} which implies a kind of symmetry.

Symmetry plays a central role in physics. Making full use of symmetry can simplify the study of properties and evolutions of physical systems \cite{asym2016}. Consequently, finding the consequences of symmetries for dynamics is a subject of broad applications in physics \cite{Marvianphd,marviannc2014}. The restriction of a SSR arises from the lack of some quantum reference frame \cite{ssr2007,rfgour2008}, when the state one possesses was prepared according to some reference frame one does not have access to. It leads to the resource theory of asymmetry \cite{Marvianphd}, where the quantum reference frame is associated with a group of transformations, denoted by $G$. A symmetric evolution is one that commutes with the action of the symmetry group \cite{ssr2007}. Then in the situation of lacking some quantum reference frame, the free states (operations) are those $G$-invariant states (operations) that are invariant under the action of group $G$ \cite{rfgour2008}. Such free states and operations are interchangeably called $G$-symmetric states and operations respectively. It is worth noting that the authors in Ref. \cite{qhtgs2009} studied the asymptotic discrimination problem of two quantum states with $G$-invariant measurements, and derived bounds on various asymptotic error exponents. Specifically, we consider the group $\bbZ_2$ consisting of two transformations, one is the identity transformation and the other is the parity transformation. The parity transformation which flips the sign of one spatial coordinate is an important concept in quantum mechanics. The group $\bbZ_2$ is associated with a reference frame for chirality. Such a frame is the component of a Cartesian frame with respect to which the handedness of a quantum system is defined \cite{rfgour2008}. One may refer to \cite[Sec. 4]{rfgour2008} for more details about chiral frames and the resource theory of the $\bbZ_2$-SSR. Due to the lack of a chiral reference frame, the measurements are specifically required to be $\bbZ_2$-invariant ones.


\begin{figure}[ht]
\centering
\includegraphics[width=0.48\textwidth]{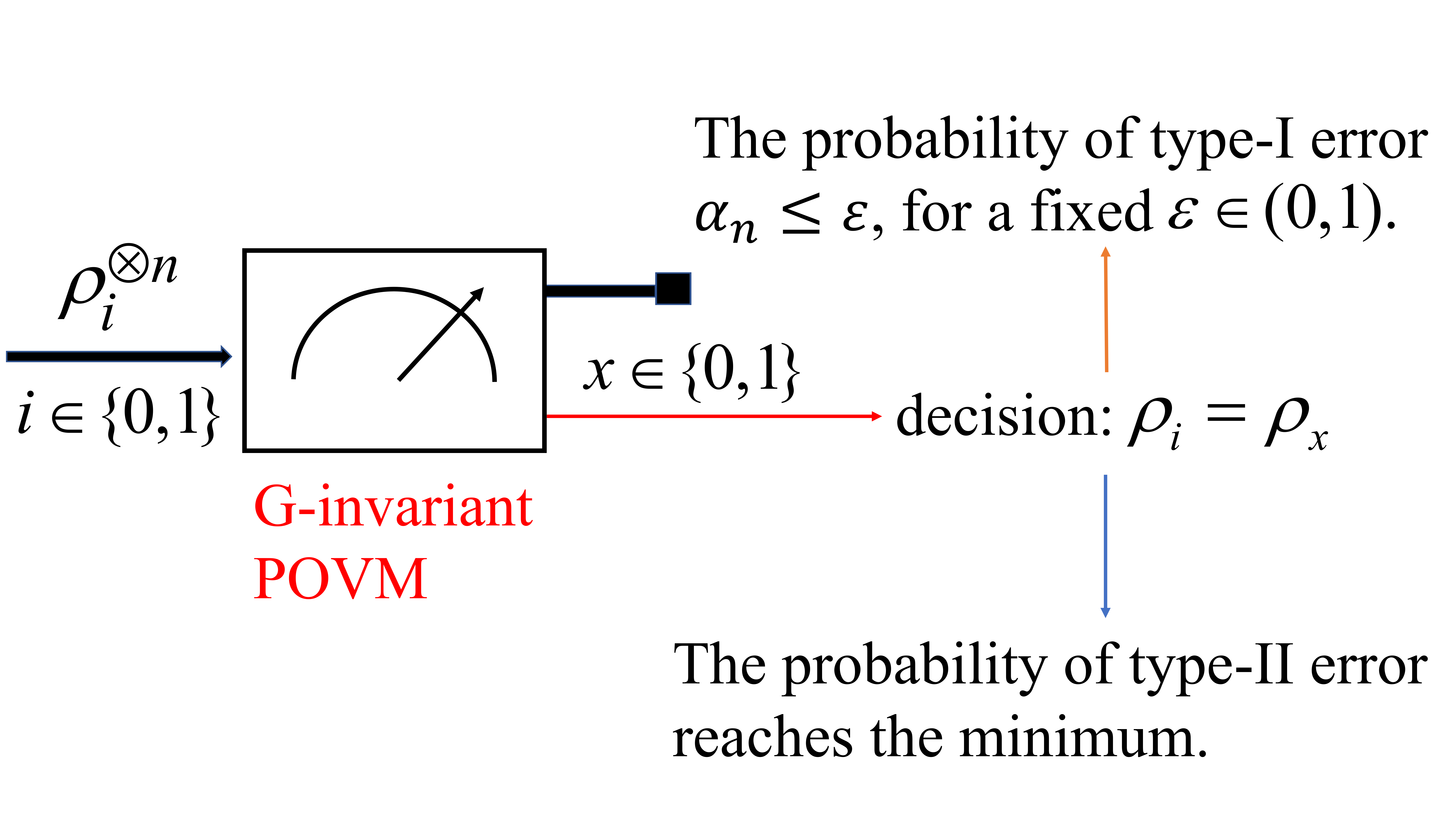}
\caption{The asymmetric QHT within the resource theory of asymmetry associated with some group $G$. The performed POVMs have to be $G$-invariant. The main task of this process is to minimize the probability $\b_n$ when the probability $\a_n$ is smaller than a constant $\epsilon$. In this paper, we specify the group $G$ as $\bbZ_2$ arising from the lack of a chiral reference frame and focus on two pure qubit states $\rho_0$ and $\rho_1$.}
\label{fig:asyqht}
\end{figure}

Now we are able to introduce the problem we focus on in this paper. It can be summarized as investigating the asymmetric QHT between two qubit states by only performing the restricted POVMs arising from the resource theory of parity. We also illustrate this specific problem by Fig. \ref{fig:asyqht}, where the probability of type-I error is tolerated within a small constant and the second probability has to be minimal. The investigation is completed by the following steps. First, we introduce an equivalent way to characterize $G$-invariant states (operations) by the so-called $G$-twirling operation for an arbitrary group $G$. The advantage of this characterization shows that every free POVM, i.e. $G$-invariant POVM, is indeed the $G$-twirling of some POVM. 
Second, we assign ourselves a specific group $\bbZ_2$ for the lack of a chiral reference frame, and formulate the analytic expression of the $\bbZ_2$-twirling acting on an arbitrary pure qubit state for any copies. Based on the explicit expressions, we obtain the accurate value of the minimal probability of type-II error when the number of copies is large enough. Note that the accurate value implies the exact behavior of the minimal probability of type-II error, rather than the bound on the asymptotic error exponent correspondingly studied in Ref. \cite{qhtgs2009}. Then we figure out the specific problem for two pure qubit states by Theorem \ref{th:main}. We also illustrate the main results in Theorem \ref{th:main} by Table \ref{tab:pure}. In Theorem \ref{th:main} (i) we discover that there exist two distinct pure qubit states whose $\bbZ_2$-twirlings are the same for any copies. Thus, in this case the minimal probability of type-II error is a non-zero constant $1-\epsilon$ as the number of copies increases. In Theorem \ref{th:main} (ii) we show that if two pure qubit states have different $\bbZ_2$-twirlings, then there exists a finite number of copies $n_\e$ related to the parameter $\e$ such that the minimal probability of type-II error reduces to zero for any $n\geq n_\e$. It implies that we can perfectly decide that the given system is prepared in the state $\rho_0^{\ox n}$ (the null hypothesis) using $\bbZ_2$-invariant POVMs for only a finite number of copies by definition. Theorem \ref{th:main} (ii) consists of three parts which are supported by Lemmas \ref{pp:generaltwoqubit} - \ref{pp:reduced-2} respectively. Furthermore, it is important to derive the critical number of copies $n_\e$ from the perspective of resource conservation as fewer copies require less resource to prepare. By fixing the optimal $\bbZ_2$-invariant  POVM, we derive the critical number $n_\e$ and estimate how large it is as $\e\to 0$ in Proposition \ref{pp:ne}. Finally, we extend our study to the context of the QHT between a pure state and a maximally mixed state. By replacing one of the two pure states with a maximally mixed state, we characterize the asymptotic behavior of the minimal probability of type-II error in Theorem \ref{thm:pandm}. The main results in Theorem \ref{thm:pandm} are also presented in Table \ref{tab:p+mm} for convenience. When $\rho_1$ (the alternative hypothesis) is a maximally mixed state, it follows from Theorem \ref{thm:pandm} (i) that the minimal probability of type-II error is nonzero and decreases in the number of copies. When $\rho_0$ (the null hypothesis) is a maximally mixed state, we similarly conclude from Theorem \ref{thm:pandm} (ii) that there exists some $\bbZ_2$-invariant POVM such that the probability of type-II error reduces to zero in a finite number of copies. In this case, the critical number of copies for a specific optimal POVM is also obtained in Theorem \ref{thm:pandm} (ii).

The remainder of this paper is organized as follows. In Sec. \ref{sec:pre} we first formulate the mathematical setting of general asymmetric QHT, and then introduce the resource theory of asymmetry by formally defining the $G$-invariant states and operations. In Sec. \ref{sec:z2ssr} we specifically investigate how the hypothesis testing relative entropy between two pure qubit states asymptotically behaves within the resource theory of asymmetry associated with the group $\bbZ_2$. First, we clarify the problem in Sec. \ref{subsec:prob}. Second, we completely solve this problem in Sec. \ref{subsec:z2qht2qubit}. We further derive the critical number of copies when the minimal probability of type-II error reduces to zero in Sec. \ref{subsec:critical}. In Sec. \ref{sec:mixedhyre} we extend our study to the context of the QHT between a pure state and a maximally mixed state. The concluding remarks are given in Sec. \ref{sec:con}. Finally, we provide the detailed proofs of several crucial results in the appendices.


\section{Preliminaries}
\label{sec:pre}

In this section we introduce the preliminaries about the asymmetric QHT and the resource theory of asymmetry as the background to describe our problem.
In Sec. \ref{subsec:aqht} we formulate the mathematical setting of general asymmetric QHT and present the related results on the general asymmetric QHT.
In Sec. \ref{subsec:qrta} we formally define the $G$-invariant states (operations), and introduce an equivalent way to characterize them via the so-called $G$-twirling operation.

\subsection{Mathematical Setting of Asymmetric QHT}
\label{subsec:aqht}

Our investigation is based on the framework of asymmetric QHT, and the key difference from the general asymmetric QHT is that the POVMs performed in our investigation have to be restricted due to some practical limitations, for example, the lack of some quantum reference frames, see Fig. \ref{fig:asyqht}. Hence, it is necessary to first formulate the framework of asymmetric QHT.
Suppose that the two quantum hypotheses are $H_0:\rho_0^{\ox n}$ (null) and $H_1:\rho_1^{\ox n}$ (alternative), where $\rho_0,\rho_1\in\cB(\cH)$, and $\rho_0^{\ox n},\rho_1^{\ox n}\in\cB(\cH^{\ox n})$. Physically, the decision between the two hypotheses depends on the result after performing a binary POVM measurement. 
We first identify the two types of error probabilities as follows.

\begin{definition}
\label{def:errorprob}

For any positive integer $n$, suppose that $\{E_n,\I-E_n\}$ is a binary POVM selected, where $0\leq E_n\leq \I$ acting on $\cH^{\ox n}$. The type-I error occurs when $H_1$ is accepted by mistake, and the type-II error occurs when $H_0$ is accepted by mistake. Then their probabilities are given by
\begin{enumerate}
 \item Probability of type-I error: $\a_n\equiv\tr\big(\rho_0^{\ox n}(\I-E_n)\big)=1-\tr(\rho_0^{\ox n} E_n)$,
 \item Probability of type-II error: $\b_n\equiv\tr(\rho_1^{\ox n}E_n)$.
\end{enumerate}
\end{definition}

In hypothesis testing, the costs associated to the two types of error can be widely different, or even incommensurate \cite{erate2008}, which leads to the asymmetric setting of hypothesis testing. Analogously, in the asymmetric QHT we treat the two types of error formulated in Def. \ref{def:errorprob} unequally. Thus, the purpose is to minimize the probability of type-II error, when the probability of type-I error can be tolerated within a small $\e$. Specifically, we shall consider the following optimization problem:
\beq
\label{eq:asymm-1}
\b_n(\e):=\inf_{0\leq E_n\leq \I}\{\b_n:\a_n\leq \e\}.
\eeq
Note that this optimization problem is a semidefinite program (SDP) \cite{geentropy2012} and can be efficiently solved when the number of copies $n$ is not very large. Furthermore, using the duality of SDP, it is possible to write $\b_n(\e)$ as the maximum of a simple function of one real variable \cite{lorenz}, namely $\b_n(\e)\equiv\max_{r\geq 0} f_{n,\e}(r)$, where
\beq
\label{eq:dualsdp}
\bal
f_{n,\e}(r)&:=(1-\e)r-\tr(r\rho_0^{\ox n}-\rho_1^{\ox n})_+ \\
       &=\frac{1}{2}\left[1+(1-2\e)r-\norm{r\rho_0^{\ox n}-\rho_1^{\ox n}}_1\right].
\eal
\eeq
This observation was proposed in Ref. \cite{lorenz} to considerably simplify the analysis in formulating the framework of quantum relative Lorenz curves. Based on this transformation $\b_n(\e)\equiv\max_{r\geq 0} f_{n,\e}(r)$, the original optimization problem Eq. \eqref{eq:asymm-1} becomes more practical to process, due to being a function of a real variable.

It is important to derive the convergence rate of $\b_n(\e)$ as the number of copies, $n$, approaches infinity. It leads to the characterization, known as the hypothesis testing relative entropy, defined by
\beq
\label{eq:qhtre}
D_H^{\e}(\rho_0^{\ox n}||\rho_1^{\ox n}):=-\log \b_n(\e), ~\forall n.
\eeq 
It is a generalized relative entropy related to QHT. It is worth noting that $D_H^{\e}(\cdot||\cdot)$ satisfies the data processing inequality \cite{qhtligong}. That is for any two states $\rho,\sigma$ and $\e\in[0,1)$ the following inequality holds.
\beq
\label{eq:dpiqhtre}
D_H^{\e}(\rho||\sigma)\geq D_H^{\e}\big(\cE(\rho)||\cE(\sigma)\big),
\eeq
where $\cE$ is a completely positive map.

In information theory, the relative entropy is a fundamental quantity to compare two different probability distributions. It was first introduced by Kullback and Leibler, and thus the relative entropy is also called Kullback–Leibler divergence \cite{kullback1951}. Then it was generalized to the quantum relative entropy by Umegaki \cite{umegaki1962}. For any two states $\rho$ and $\sigma$, the quantum relative entropy between them is defined by
\beq
\label{eq:qredef}
D(\rho||\sigma):=
\left\{
\bal
&\tr\big(\rho(\log\rho-\log\sigma)\big),&~ \text{if $\supp\rho\subseteq\supp\sigma$,}\\
&+\infty,&~\text{otherwise,}
\eal
\right.
\eeq
where $\supp\rho$ denotes the support projection of state $\rho$.
The quantum relative entropy has been discovered to play an essential role in various aspects of quantum information theory \cite{reereview2002}. In addition to the hypothesis testing relative entropy, there are several other generalized relative entropies, such as the information spectrum relative entropy \cite{spectrumre2002} and the min- and max-relative entropies \cite{minmaxre2008}. These quantities are related to each other, and have been proved to be of important operational significance in both classical and quantum information theory to obtain the optimal rates of protocols \cite{minmaxre2008,encoding2019}.

In the following we present the celebrated result on the general asymmetric QHT. It reveals the essential connection between the hypothesis testing relative entropy and the quantum relative entropy.
\begin{lemma}[Quantum Stein’s Lemma]\cite{qsl2000}
\label{le:qstein}
For any $0<\e<1$, we have
\beq
\label{eq:qsteine}
\lim_{n\to \infty}\frac{1}{n} D_H^{\e}(\rho_0^{\ox n}||\rho_1^{\ox n})=D(\rho_0||\rho_1),
\eeq
where $D(\rho_0||\rho_1)$ is the quantum relative entropy.
\end{lemma}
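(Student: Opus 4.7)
The plan is to establish the limit in Eq. \eqref{eq:qsteine} by proving two one-sided bounds separately: the direct (achievability) part $\liminf_{n\to\infty} \frac{1}{n} D_H^{\e}(\rho_0^{\ox n}||\rho_1^{\ox n}) \geq D(\rho_0||\rho_1)$, due to Hiai and Petz, and the strong converse $\limsup_{n\to\infty} \frac{1}{n} D_H^{\e}(\rho_0^{\ox n}||\rho_1^{\ox n}) \leq D(\rho_0||\rho_1)$, due to Ogawa and Nagaoka. Both bounds rely on spectral (pinching) techniques together with the additivity of R\'enyi-type divergences on product states, but the underlying mechanisms differ substantially.

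For the direct part, I would fix any rate $r < D(\rho_0||\rho_1)$ and construct an explicit Neyman--Pearson-type test by taking $E_n$ to be the spectral projection of $\rho_0^{\ox n} - 2^{nr}\rho_1^{\ox n}$ onto its nonnegative eigenspace. Multiplying the defining operator inequality $(\rho_0^{\ox n} - 2^{nr}\rho_1^{\ox n})E_n \geq 0$ on the right by $E_n$ and taking the trace yields $2^{nr}\b_n \leq \tr(\rho_0^{\ox n}E_n)\leq 1$, hence $\b_n \leq 2^{-nr}$. The nontrivial step is to verify that $\a_n \to 0$, so that $\a_n \leq \e$ for all sufficiently large $n$. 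I would achieve this by first pinching $\rho_0^{\ox n}$ with respect to the spectral decomposition of $\rho_1^{\ox n}$, which does not increase the quantum relative entropy and renders the two states simultaneously diagonal, thereby reducing the question to a classical weak law of large numbers for the eigenvalues of the log-likelihood ratio operator $\frac{1}{n}(\log\rho_0^{\ox n}-\log\rho_1^{\ox n})$, whose expectation is $D(\rho_0||\rho_1)$. Letting $r \uparrow D(\rho_0||\rho_1)$ then delivers $\frac{1}{n}D_H^{\e}(\rho_0^{\ox n}||\rho_1^{\ox n}) \geq r$ for all large $n$, hence the claimed $\liminf$.

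For the strong converse, I would take an arbitrary test $E_n$ with $\a_n \leq \e$ and bound $\b_n$ from below through the Petz--R\'enyi divergence $D_s(\rho||\sigma) := \frac{1}{s-1}\log\tr(\rho^s\sigma^{1-s})$ for parameter $s > 1$. Applying the data processing inequality for $D_s$ to the two-outcome measurement channel induced by $\{E_n,\I-E_n\}$, together with the additivity identity $D_s(\rho_0^{\ox n}||\rho_1^{\ox n}) = n\,D_s(\rho_0||\rho_1)$, I would obtain a schematic inequality of the form $(1-\a_n)^s\b_n^{1-s} \leq 2^{n(s-1)D_s(\rho_0||\rho_1)}$, which after substituting $\a_n \leq \e$ rearranges (using $1-s<0$) to $\b_n \geq (1-\e)^{s/(s-1)}\,2^{-n D_s(\rho_0||\rho_1)}$. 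This yields $\frac{1}{n} D_H^{\e}(\rho_0^{\ox n}||\rho_1^{\ox n}) \leq D_s(\rho_0||\rho_1) + o(1)$, and the converse follows by letting $s \downarrow 1$ and invoking the standard continuity limit $\lim_{s\to 1^+} D_s(\rho||\sigma) = D(\rho||\sigma)$.

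The principal obstacle is the strong converse: because $\rho_0$ and $\rho_1$ need not commute, there is no shared probability space on which to apply a classical large-deviations argument, and one must exploit the operator-monotonicity of the Petz--R\'enyi divergences rather than any purely operational bound; in particular, the additivity $D_s(\rho_0^{\ox n}||\rho_1^{\ox n}) = n\,D_s(\rho_0||\rho_1)$ relies on a functional-calculus identity that is specific to i.i.d.\ product structures. A secondary technical issue concerns the support condition implicit in Eq. \eqref{eq:qredef}: when $\supp\rho_0 \not\subseteq \supp\rho_1$, both sides of Eq. \eqref{eq:qsteine} equal $+\infty$, which I would handle separately by using the projection onto the orthogonal complement of $\supp\rho_1^{\ox n}$ as an auxiliary test to show that $\b_n(\e) = 0$ for all sufficiently large $n$, so that $D_H^{\e}$ diverges as required.
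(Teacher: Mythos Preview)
The paper does not supply its own proof of this lemma; it is stated as a background result and attributed to the references \cite{hp1991} and \cite{qsl2000}, and is invoked only for comparison with the paper's new results on $\mathbb{Z}_2$-restricted testing. There is therefore nothing in the paper to compare your argument against.

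That said, your outline is the standard two-part proof from those very references and is essentially correct. Two minor remarks. First, your strong-converse argument via the data processing inequality for the Petz--R\'enyi divergence $D_s$ with $s>1$ is a somewhat more modern packaging than the original Ogawa--Nagaoka proof, which instead established a direct operator inequality bounding $\tr(\rho_0^{\ox n}E_n)$ in terms of $\tr(\rho_0^{s}\rho_1^{1-s})^n$ and $\b_n$; the two routes are equivalent in content, but you should be aware that the monotonicity of $D_s$ under CPTP maps for $s\in(1,2]$ is itself a nontrivial input (Ando/Petz), not something that follows from the definition. Since you only need $s$ slightly above $1$, this is harmless. Second, in the direct part your description of the pinching step is a bit compressed: pinching $\rho_0^{\ox n}$ in the eigenbasis of $\rho_1^{\ox n}$ does not by itself reduce to an i.i.d.\ classical problem, because the pinched state is not a product state; the actual Hiai--Petz argument controls the cost of pinching by the polynomial growth of the number of distinct eigenvalues of $\rho_1^{\ox n}$ (type counting), which is the step where the i.i.d.\ structure enters. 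With those caveats, your plan is sound and matches the literature the paper cites.
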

From a practical point of view, Lemma \ref{le:qstein} states that for each $0<\e<1$ there exists a POVM such that the probability of type-II error decreases exponentially in the number of copies $n$: $\beta_n(\e)\sim \exp(-nr)$ with the optimal exponent $r=D(\rho_0||\rho_1)$.

\subsection{The Resource Theory of Asymmetry}
\label{subsec:qrta}

Denote by $G$ the group of transformations associated with the reference frame of concern. If we face the restriction of lacking this reference frame, the free states are those that can be prepared without access to the frame, and thus are invariant under these transformations in the group $G$. 
Denote by $\cB(\cH)$ the bounded linear operators on the given Hilbert space $\cH$. 
Suppose that $U:~G\to\cB(\cH)$ is the unitary representation of group $G$ that corresponds to the physical transformations in $G$. Then the free states are characterized by the $G$-invariant (symmetric) states satisfying
\beq
\label{eq:ginv-1}
U(g)\rho U^\dg(g)=\rho,\quad \forall g\in G.
\eeq
Let $\cS(\cH)$ be the set of normalized states. The set of $G$-invariant states is denoted by $\inv(G)$,
\beq
\label{eq:ginv-2}
\inv(G)\equiv\{\rho|~\forall g\in G: U(g)\rho U^\dg(g)=\rho,\rho\in\cS(\cH)\}.
\eeq
Similarly, the free operations are characterized by the $G$-invariant (symmetric) operations satisfying 
\beq
\label{eq:ginv-3}
U(g)\circ\cE \circ U^\dg(g)=\cE,\quad \forall g\in G,
\eeq
where $\cE$ is a completely positive map.

There are two equivalent ways to characterize the set $\inv(G)$ by the useful $G$-twirling operation \cite{gour2009}.
Let $\cG:~\cB(\cH)\to\cB(\cH)$ be the trace-preserving completely positive linear map defined by
\beq
\label{eq:gtwirling}
\cG[\rho]\equiv\int_G U(g)\rho U^\dg(g)~dg.
\eeq
which averages over the action of the group $G$ with the $G$-invariant (Haar) measure $dg$. If $G$ is finite, one simply replaces the integral with a sum as
\beq
\label{eq:gtwirling-1}
\cG[\rho]\equiv\frac{1}{\abs{G}}\sum_{g\in G} U(g)\rho U^\dg(g).
\eeq
Here, $\cG$ is known as the $G$-twirling operation. One can verify that the $G$-twirling operation has the following two basic properties. First, for a linear operator $\rho\in\cB(\cH)$, $\cG[\rho]=\rho$ if and only if $U(g)\rho U^\dg(g)=\rho,~\forall g\in G$. Thus, the set $\inv(G)$ is equivalent to the set of fixed points of $\cG$
\beq
\label{eq:gtwirling-2}
\inv(G)=\{\rho|\cG[\rho]=\rho,\rho\in\cS(\cH)\}.
\eeq
Second, $\cG$ is idempotent, namely $\cG^2\equiv\cG\circ\cG=\cG$. Thus, the set $\inv(G)$ is equivalent to the image of $\cG$ \cite{gour2009} 
\beq
\label{eq:gtwirling-3}
\inv(G)=\{\rho|\rho=\cG[\sigma],~\forall\sigma\in\cS(\cH)\}.
\eeq

If the operator $\rho$ in Eq. \eqref{eq:gtwirling} or Eq. \eqref{eq:gtwirling-1} is replaced with a quantum operation $\cE$, we can similarly obtain the $G$-twirling of $\cE$ as
\beq
\label{eq:gtwirling-4}
\cG[\cE]=\int_G U(g)\circ\cE\circ U^\dg(g)~dg.
\eeq
From Eq. \eqref{eq:gtwirling-3} we also conclude that any $G$-invariant operation can be expressed by $\cG[\cE]$ for some completely positive map $\cE$.

\section{Quantum hypothesis testing between two pure qubit states with parity}
\label{sec:z2ssr}

In this section we specifically investigate how the hypothesis testing relative entropy between two pure qubit states asymptotically behaves within the resource theory of asymmetry associated with the group $\mathbb{Z}_2$. 
In Sec. \ref{subsec:prob} we clarify the problem that we study in this paper.
In Sec. \ref{subsec:z2qht2qubit} we present our main results on this problem. Specifically, we derive the asymptotic behavior of the hypothesis testing relative entropy with parity restriction.
In Sec. \ref{subsec:critical} we derive the critical number of copies to achieve the goal of the asymmetric QHT.

\subsection{Problem description}
\label{subsec:prob}

Here, we specifically describe the problem that we focus on. Our research objects are two pure single qubit states. The purpose of the asymmetric QHT is to discriminate between such two qubit states with the minimal probability of type-II error while allowing the probability of type-I error less than a constant $\epsilon$. How fast the minimal probability declines in the number of copies is our main concern. However, different from the general QHT, the POVMs that can be performed in tests are only the symmetric POVMs corresponding to the resource theory of parity. Mathematically, the resource theory of parity is associated with the group $\mathbb{Z}_2$. Therefore, our problem can be summarized as figuring out the asymptotic behavior of the minimal probability of type-II error by performing the $\mathbb{Z}_2$-invariant POVMs, i.e. the asymptotic behavior of the minimal $\b_n$ as the number of copies $n$ increases in Fig. \ref{fig:asyqht}.

First, given the Hilbert space $\cH\cong\bbC^2$, denote by $\ket{0}$ and $\ket{1}$ a pair of standard states of even and odd parity respectively. Specifically, the parity transformation flips the sign of $\ket{1}$ only. Then two arbitrary pure qubit states generally read as
\beq
\label{eq:twopuregeneric}
\bal
\ket{\psi_0}&=\sqrt{p}\ket{0}+\sqrt{1-p}\ket{1},\\
\ket{\psi_1}&=\sqrt{q}\ket{0}+e^{i\phi}\sqrt{1-q}\ket{1},
\eal
\eeq
with $p,q\in[0,1]$ and $\phi\in[0,2\pi)$. Note here that we eliminate the initial phase of $\ket{\psi_0}$ by adjusting the phase of $\ket{1}$.  In what follows we shall denote $\rho_0=\proj{\psi_0}$ and $\rho_1=\proj{\psi_1}$. In the special case when one of $p,q$ is $0$ or $1$, the phase $\phi$ is redundant as we can absorb the phase into the basis element $\ket{0}$ or $\ket{1}$.

Second, we analyze the group $\mathbb{Z}_2$ and the corresponding operation $\mathbb{Z}_2$-twirling in detail. The elements of group $\mathbb{Z}_2$ are $e$ (identity) and $f$ (flip), and their representation on Hilbert space is
\beq
\label{eq:Z2SSR}
T(e)=I, \quad T(f)=\om,
\eeq
where $I$ is the identity operator, and $\om$ is the parity operator satisfying $\om^2=I$. Hence, $\om$ is a Hermitian operator and its eigenvalues are $\pm 1$ \cite{rfgour2008}.
Then the Hilbert space can be decomposed into the eigenspaces of even and odd parity respectively corresponding to the eigenvalues of parity operator $\om$, i.e.
$$\cH=\cH_{even}\oplus\cH_{odd}.$$
Thus, the action of operator $\om$ is defined by
\beq
\label{eq:Z2SSR-1}
\om(\ket{\psi})=
\begin{cases}
\ket{\psi}, & \ket{\psi}\in\cH_{even},\\
-\ket{\psi}, & \ket{\psi}\in\cH_{odd}.
\end{cases}
\eeq
For completeness and convenience, we further explain the action of operator $\omega$ on an $N$-partite tensor product state in $\cH^{\ox N}$. We denote the tensor product state as $\bigotimes_{i=1}^N\ket{l_i,m_i}$, $\forall ~l_i\in\{0,1\}$, where $\ket{l_i,m_i}\in\cH_{even}$ if $l_i=0$ and $\ket{l_i,m_i}\in\cH_{odd}$ if $l_i=1$. Then the action of operator $\omega$ is defined by
\beq
\label{eq:Z2SSR-1x}
\om\left(\bigotimes_{i=1}^N\ket{l_i,m_i}\right)=
(-1)^b \bigotimes_{i=1}^N\ket{l_i,m_i},
\eeq
where $b=\left(\sum\limits_{i=1}^N l_i\right)\mod 2$.

Based on the discussion in Sec. \ref{subsec:qrta}, the symmetric states here are those $\mathbb{Z}_2$-invariant states satisfying
\beq
\label{eq:z2invs}
\om\rho\om =\rho,\quad \text{or equivalently}\quad [\rho,\om]=0.
\eeq
Using the $\mathbb{Z}_2$-twirling operation we can equivalently express any symmetric state as 
\beq
\label{eq:z2twirlingdef}
\cZ[\rho]=\frac{1}{2}\rho+\frac{1}{2}\om\rho\om.
\eeq
for some state $\rho$.
The symmetric operations are those $\mathbb{Z}_2$-invariant operations satisfying
\beq
\label{eq:z2invs-op}
\om\circ\cE\circ\om =\cE.
\eeq
Thus, all symmetric binary POVMs can similarly be characterized by the $\mathbb{Z}_2$-twirling operation as
\beq
\label{eq:sympovm}
\{\cZ[E],\cZ[\I-E]\},\quad \forall~ 0\leq E\leq \I.
\eeq

Therefore, the two error probabilities arising from the symmetric POVMs are accordingly specified as 
\beq
\label{eq:errorprob-sym}
\bal
\a_n&=1-\tr(\rho_0^{\ox n}\cZ[E_n]),\\
\b_n&=\tr(\rho_1^{\ox n}\cZ[E_n]).
\eal
\eeq 
Because the $G$-twirling operation is a self-adjoint operator, we further obtain that 
$$\tr(\rho_i^{\ox n}\cZ[E_n])=\tr(\cZ[\rho_i^{\ox n}]E_n),~\forall i=0,1.$$
Due to this relation we may equivalently consider the hypothesis testing relative entropy between two symmetric states as follows:
\beq
\label{eq:qhtre-sym}
\bal
& D_H^\e(\cZ[\rho_0^{\ox n}]||\cZ[\rho_1^{\ox n}])\equiv \\
& -\log\inf_{0\leq E_n\leq \I_n}\big\{\tr(\cZ[\rho_1^{\ox n}]E_n): \tr(\cZ[\rho_0^{\ox n}]E_n)\geq 1-\e\big\}.
\eal
\eeq

Finally, our problem becomes to analytically determine the asymptotic behavior of $D_H^\e(\cZ[\rho_0^{\ox n}]||\cZ[\rho_1^{\ox n}])$ given by Eq. \eqref{eq:qhtre-sym} for two arbitrary pure qubit states $\rho_0$ and $\rho_1$. Note that the quantum Stein's Lemma, i.e. Lemma \ref{le:qstein}, characterizes the asymptotic behavior of the hypothesis testing relative entropy between two states in the form of multiple copies, namely $\rho_0^{\ox n}$ and $\rho_1^{\ox n}$.
However, the $\mathbb{Z}_2$-twirling on an $n$-copy state is generally no longer in the form of multiple copies. Thus, we cannot derive the asymptotic behavior of $D_H^\e(\cZ[\rho_0^{\ox n}]||\cZ[\rho_1^{\ox n}])$ by quantum Stein's Lemma directly. This requires us to derive the analytic expression of $\cZ[\rho^{\ox n}]$ for any number of copies. In the following subsection we focus on the pure qubit states and derive the analytic expression of $\cZ[\proj{\psi}^{\ox n}]$ for any pure qubit state $\ket{\psi}$ and any copies. By virtue of the analytic expression we derive the asymptotic behavior of the minimal probability of type-II error with symmetric POVMs. We will discuss our results in detail by comparing with quantum Stein's Lemma below Theorem \ref{th:main}.

\subsection{Asymptotic behavior of the minimal probability of type-II error}
\label{subsec:z2qht2qubit}

After clarifying the specific problem, we shall deeply study it in this subsection. We present our main results by Theorem \ref{th:main}, where we completely determine the asymptotic behavior of type-II error probability for arbitrary two pure qubit states $\rho_0$ and $\rho_1$. Theorem \ref{th:main} reveals a very interesting fact that in most instances the two pure qubit states prepared in finite copies can be perfectly distinguished according to the measurement results with symmetric POVMs. Theorem \ref{th:main} is composed of three parts which are supported by Lemmas \ref{pp:generaltwoqubit} - \ref{pp:reduced-2} respectively.

We begin with formulating the expression of $\mathbb{Z}_2$-twirling on the $n$ copies of a pure qubit state. The corresponding Hilbert space is $(\bbC^{2})^{\ox n}$. First of all, it is necessary to clarify the two subspaces $\cH_{even}$ and $\cH_{odd}$ of $(\bbC^{2})^{\ox n}$. According to the action of parity operator $\om$ defined by Eq. \eqref{eq:Z2SSR-1}, we conclude that $\cH_{even}$ is spanned by the vectors that up to a permutation of subsystems are $\ket{0}^{\ox (n-j)}\ket{1}^{\ox j}$ where $j$ is even, and $\cH_{odd}$ is spanned by the vectors that up to a permutation of subsystems are $\ket{0}^{\ox (n-j)}\ket{1}^{\ox j}$ where $j$ is odd. It is not difficult to calculate that $\dim(\cH_{even})=\sum\limits_{j-even}{n\choose j}$ and $\dim(\cH_{odd})=\sum\limits_{j-odd}{n\choose j}$. Next, we can calculate $\cZ[\proj{\psi}^{\ox n}]$ for a qubit state $\ket{\psi}$ by Eq. \eqref{eq:z2twirlingdef}. Obviously, $\proj{\psi}^{\ox n}$ is invariant under $\mathbb{Z}_2$-twirling operation if $\ket{\psi}=\ket{0}$ or $\ket{1}$. It is remaining to consider the generic qubit state $\ket{\psi}=\sqrt{p}\ket{0}+e^{i\phi}\sqrt{1-p}\ket{1}$ when $p\in(0,1)$. A direct calculation yields that
\beq
\label{eq:z2ssrpure-3}
\bal
\ket{\psi}^{\ox n}&=\sum_{j-even} e^{i(j\phi)} (\sqrt{p})^{n-j}(\sqrt{1-p})^j \sqrt{{n \choose j}} \ket{v_j}\\
                  &+\sum_{j-odd} e^{i(j\phi)} (\sqrt{p})^{n-j}(\sqrt{1-p})^j \sqrt{{n\choose j}} \ket{v_j},      
\eal
\eeq
where 
\beq
\label{eq:z2ssrpure-2}
\bal
\forall j,~\ket{v_j}&=\sqrt{\frac{1}{{n\choose j}}}\sum\ket{0}^{\ox (n-j)}\ket{1}^{\ox j},
\eal
\eeq
and the sum is over all the ways of having $(n-j)$ systems in state $\ket{0}$ and $j$ systems in state $\ket{1}$. It is worthy to note the following two useful equalities
\beq
\label{eq:nchoosej}
\bal
\sum_{j-even} {n \choose j} a^{n-j}b^j&=\frac{1}{2}\big((a+b)^n+(a-b)^n\big),\\
\sum_{j-odd} {n \choose j} a^{n-j}b^j&=\frac{1}{2}\big((a+b)^n-(a-b)^n\big).
\eal
\eeq
They will be used to simplify our calculation in what follows.
Denote $\Dp\equiv p-(1-p)=2p-1$, and let
\beq
\label{eq:z2ssrpure-1}
\bal
\ket{0_p}&:=\sum_{j-even} e^{i(j\phi)} \sqrt{\frac{{n\choose j}p^{n-j}(1-p)^j}{\frac{1}{2}(1+(\Dp)^n)}}\ket{v_j},\\
\ket{1_p}&:=\sum_{j-odd} e^{i(j\phi)} \sqrt{\frac{{n\choose j}p^{n-j}(1-p)^j}{\frac{1}{2}(1-(\Dp)^n)}}\ket{v_j}.
\eal
\eeq
Then one can rewrite Eq. \eqref{eq:z2ssrpure-3} as
\beq
\label{eq:z2ssrpure-3.x}
\ket{\psi}^{\ox n}=\sqrt{\frac{1+(\Dp)^n}{2}}\ket{0_p}+\sqrt{\frac{1-(\Dp)^n}{2}}\ket{1_p},
\eeq
and thus $\proj{\psi}^{\ox n}$ is formulated as
\beq
\label{eq:z2ssrpure-4}
\bal
&~~~~\frac{1+(\Dp)^n}{2}\proj{0_p}+\frac{1-(\Dp)^n}{2}\proj{1_p}\\
&+\sqrt{\frac{(1+(\Dp)^n)(1-(\Dp)^n)}{4}}(\ketbra{0_p}{1_p}+\ketbra{1_p}{0_p}).
\eal
\eeq
It follows from Eq. \eqref{eq:z2twirlingdef} that
\beq
\label{eq:ncopyz2twirling-1}
\cZ[\proj{\psi}^{\ox n}]=\frac{1+(\Dp)^n}{2}\proj{0_p}+\frac{1-(\Dp)^n}{2}\proj{1_p}.
\eeq
According to Eq. \eqref{eq:ncopyz2twirling-1}, one can verify that for generic pure qubit state $\ket{\psi}$, $\cZ[\proj{\psi}^{\ox n}]$ cannot be expressed as the $n$ copies of some qubit state. Therefore, we have to introduce the dual relation given by Eq. \eqref{eq:dualsdp} to deal with our problem.
In what follows, we shall adopt the expression of $\cZ[\proj{\psi}^{\ox n}]$ in Eq. \eqref{eq:ncopyz2twirling-1} when the $\mathbb{Z}_2$-twirling of an $n$-copy generic pure qubit state is needed.

Now we are ready to show our main results on the asymptotic behavior of the minimal probability of type-II error for two pure qubit states $\rho_0$ and $\rho_1$. In order to understand the main results in Theorem \ref{th:main} conveniently, we also illustrate them in Table \ref{tab:pure} as follows.

\begin{table*}[htbp]
\caption{QHT between two pure qubit states using $\mathbb{Z}_2$-invariant POVMs}
\centering
\begin{tabular}{|c|c|c|}
\hline
$\{\ket{\psi_0},~\ket{\psi_1}\}$ & $\cZ[\proj{\psi_0}^{\ox n}]$ and $\cZ[\proj{\psi_1}^{\ox n}]$ & $\inf\{\tr(\cZ[\rho_1^{\ox n}]E):\tr(\cZ[\rho_0^{\ox n}]E)\geq 1-\e\}$ \\
\hline
$\ket{\psi_0}=\ket{\psi_1}$ & equal & $1-\epsilon$ for any $n$ \\
\hline
$\{\sqrt{p}\ket{0}+\sqrt{1-p}\ket{1},~\sqrt{p}\ket{0}-\sqrt{1-p}\ket{1}\}$ & equal & $1-\epsilon$ for any $n$ \\
\hline
other cases & distinct & reduce to zero when $n\geq n_\epsilon$ for a finite $n_\epsilon$ \\
\hline
\end{tabular}
\label{tab:pure}
\end{table*}

\begin{theorem}
\label{th:main}
Suppose $\rho_0=\proj{\psi_0}$ and $\rho_1=\proj{\psi_1}$ where $\ket{\psi_0}$ and $\ket{\psi_1}$ are two pure qubit states.

(i) If $\ket{\psi_0}=\ket{\psi_1}$, or $\{\ket{\psi_0},\ket{\psi_1}\}=\{\sqrt{p}\ket{0}+\sqrt{1-p}\ket{1},\sqrt{p}\ket{0}-\sqrt{1-p}\ket{1}\}$ with $p\in(0,1)$, then for any $n$,
\beq
\label{eq:mainth-1}
\bal
&~~~~2^{-D_H^{\e}(\cZ[\rho_0^{\ox n}]||\cZ[\rho_1^{\ox n}])} \\
&\equiv\inf\{\tr(\cZ[\rho_1^{\ox n}]E):\tr(\cZ[\rho_0^{\ox n}]E)\geq 1-\e\} \\
&=1-\e.
\eal
\eeq

(ii) If $\ket{\psi_0}$ and $\ket{\psi_1}$ are not given in (i), then for any given constant $\e\in(0,1)$, there exists a large enough $n_\e$ such that $\forall n\geq n_\e$,
\beq
\label{eq:mainth-2}
\inf\{\tr(\cZ[\rho_1^{\ox n}]E):\tr(\cZ[\rho_0^{\ox n}]E)\geq 1-\e\}=0.
\eeq
Particularly, if one of $\ket{\psi_0}$ and $\ket{\psi_1}$ is $\ket{0}$, and the other one is $\ket{1}$, then $\forall n\geq 1$, Eq. \eqref{eq:mainth-2} holds.
\qed
\end{theorem}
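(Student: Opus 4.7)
The plan is to treat parts (i) and (ii) separately. Part (i) collapses to a trivial one-function optimization once I exhibit the equality of the two twirled states, while Part (ii) is handled constructively by exhibiting a single $\bbZ_2$-invariant POVM element that simultaneously achieves zero type-II error and vanishing type-I error as $n\to\infty$.

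For Part (i), the sub-case $\ket{\psi_0}=\ket{\psi_1}$ is immediate since $\rho_0^{\ox n}=\rho_1^{\ox n}$. In the sub-case $\ket{\psi_1}=\om\ket{\psi_0}$, I will use $\rho_1^{\ox n}=\om^{\ox n}\rho_0^{\ox n}\om^{\ox n}$ together with the fact (from Eq.~\eqref{eq:Z2SSR-1x}) that the $n$-qubit parity operator is $\om^{\ox n}$, so the two terms in the twirling simply get swapped and $\cZ[\rho_1^{\ox n}]=\cZ[\rho_0^{\ox n}]$. Writing $\sigma$ for this common operator, the infimum in Eq.~\eqref{eq:mainth-1} reduces to $\inf\{\tr(\sigma E):\tr(\sigma E)\geq 1-\e,\,0\leq E\leq \I\}$; the constraint forces the objective to be at least $1-\e$, and equality is attained e.g.\ at $E=(1-\e)\I$, so the infimum equals $1-\e$.

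For Part (ii), I will take
$$E_n := \I - \proj{0_q} - \proj{1_q},$$
i.e.\ the projector onto the orthogonal complement of $\supp\cZ[\rho_1^{\ox n}]$ as read off Eq.~\eqref{eq:ncopyz2twirling-1} (with the obvious rank-one modification when $q\in\{0,1\}$ collapses the support). Since $\ket{0_q}\in\cH_{even}$ and $\ket{1_q}\in\cH_{odd}$, this $E_n$ is $\bbZ_2$-invariant, and by construction $\tr(\cZ[\rho_1^{\ox n}]E_n)=0$, so $E_n$ is feasible with objective value $0$ as soon as it meets the type-I constraint. Using the parity orthogonality $\langle 0_p|1_q\rangle=0=\langle 1_p|0_q\rangle$,
$$\tr(\cZ[\rho_0^{\ox n}]E_n)=1-\tfrac{1+(\Dp)^n}{2}|\langle 0_p|0_q\rangle|^2-\tfrac{1-(\Dp)^n}{2}|\langle 1_p|1_q\rangle|^2,$$
so the whole matter reduces to showing both overlap moduli vanish as $n\to\infty$.

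Collapsing the defining sums in Eq.~\eqref{eq:z2ssrpure-1} via the binomial identities of Eq.~\eqref{eq:nchoosej}, I obtain
$$\langle 0_p|0_q\rangle=\frac{A^n+B^n}{\sqrt{(1+(\Dp)^n)(1+(\Dq)^n)}},\qquad \langle 1_p|1_q\rangle=\frac{A^n-B^n}{\sqrt{(1-(\Dp)^n)(1-(\Dq)^n)}},$$
where $A:=\sqrt{pq}+e^{i\phi}\sqrt{(1-p)(1-q)}=\langle\psi_0|\psi_1\rangle$ and $B:=\sqrt{pq}-e^{i\phi}\sqrt{(1-p)(1-q)}=\bra{\psi_0}\om\ket{\psi_1}$. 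The main obstacle is a clean verification that $|A|<1$ and $|B|<1$ strictly throughout case (ii): a Cauchy--Schwarz argument on the probability vectors $(\sqrt{p},\sqrt{1-p})$ and $(\sqrt{q},\sqrt{1-q})$ pins $|A|=1$ to $p=q,\phi=0$ and $|B|=1$ to $p=q,\phi=\pi$, both being exactly the situations excluded by (ii). Granted this strictness, $A^n,B^n\to 0$ exponentially while the denominators stay bounded below, so both overlaps decay to zero and the type-I constraint is eventually met. The degenerate boundary situations $p$ or $q\in\{0,1\}$ are handled by the same formulas with one denominator factor replaced by $\sqrt{2}$, and the special case $\{\ket{\psi_0},\ket{\psi_1}\}=\{\ket{0},\ket{1}\}$ already gives orthogonal supports at $n=1$, so $E_n$ works verbatim for every $n\geq 1$.
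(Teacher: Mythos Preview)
Your proof is correct and follows essentially the same route as the paper. For Part~(ii), the paper's Lemmas~\ref{pp:generaltwoqubit}--\ref{pp:reduced-2} each perform Gram--Schmidt starting from $\ket{0_q},\ket{1_q}$ and take the measurement element to be the projector onto their orthogonal complement within the relevant span; your uniform choice $E_n=\I-\proj{0_q}-\proj{1_q}$ is the same projector extended trivially to the full space, and the overlap computation you carry out is identical to Eqs.~\eqref{eq:generic-5.1}--\eqref{eq:generic-5.11}. The differences are purely cosmetic but worth noting: your Part~(i) argument via $\ket{\psi_1}=\om\ket{\psi_0}\Rightarrow\rho_1^{\ox n}=\om^{\ox n}\rho_0^{\ox n}\om^{\ox n}\Rightarrow\cZ[\rho_1^{\ox n}]=\cZ[\rho_0^{\ox n}]$ is cleaner than the paper's explicit coordinate check in Eq.~\eqref{eq:mainthmi-4.1}, and your identification $A=\langle\psi_0|\psi_1\rangle$, $B=\langle\psi_0|\om|\psi_1\rangle$ makes the strictness $|A|,|B|<1$ transparent (equality forces $\ket{\psi_1}\in\{\ket{\psi_0},\om\ket{\psi_0}\}$), whereas the paper verifies this by hand in Eq.~\eqref{eq:generic-x5}. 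Your unified treatment avoids the three-way case split into Lemmas~\ref{pp:generaltwoqubit}--\ref{pp:reduced-2}; the only place requiring care is the boundary $p\in\{0,1\}$ or $q\in\{0,1\}$, where one of $\ket{0_p},\ket{1_p}$ (resp.\ $\ket{0_q},\ket{1_q}$) is undefined and the corresponding term simply drops out---which you acknowledge but could state a bit more precisely than ``one denominator factor replaced by $\sqrt{2}$''.
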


\begin{proof}
(i) In the first case, namely when $\ket{\psi_0}=\ket{\psi_1}$, one can obtain Eq. \eqref{eq:mainth-1} directly by definition. In the second case, we may assume $\ket{\psi_0}=\sqrt{p}\ket{0}+\sqrt{1-p}\ket{1}$ and $\ket{\psi_1}=\sqrt{p}\ket{0}-\sqrt{1-p}\ket{1}$ without loss of generality. Then it follows that
\beq
\label{eq:mainthmi-4}
\bal
\cZ[\rho_0^{\ox n}]=\frac{1+(\Dp)^n}{2}\proj{0_p}+\frac{1-(\Dp)^n}{2}\proj{1_p},\\
\cZ[\rho_1^{\ox n}]=\frac{1+(\Dp)^n}{2}\proj{\tilde{0}_p}+\frac{1-(\Dp)^n}{2}\proj{\tilde{1}_p},\\
\eal
\eeq
where 
\beq
\label{eq:mainthmi-4.1}
\bal
\ket{0_p}&=\sum_{j-even} \sqrt{\frac{{n\choose j}p^{n-j}(1-p)^j}{\frac{1}{2}(1+(\Dp)^n)}}\ket{v_j},\\
\ket{1_p}&=\sum_{j-odd}  \sqrt{\frac{{n\choose j}p^{n-j}(1-p)^j}{\frac{1}{2}(1-(\Dp)^n)}}\ket{v_j},\\
\ket{\tilde{0}_p}&=\ket{0_p},\quad \ket{\tilde{1}_p}=-\ket{1_p},
\eal
\eeq
and $\ket{v_j}$ is given by Eq. \eqref{eq:z2ssrpure-2}. It follows from Eq. \eqref{eq:mainthmi-4} that $\forall n\geq 1$, $\cZ[\rho_0^{\ox n}]=\cZ[\rho_1^{\ox n}]$. Thus, by definition we obtain Eq. \eqref{eq:mainth-1}.

(ii) This assertion follows from Lemmas \ref{pp:generaltwoqubit} - \ref{pp:reduced-2}. For the last statement, one can verify it directly.

This completes the proof.
\end{proof}

From Eq. \eqref{eq:mainth-1} we observe that the minimum probability of type-II error does not decrease in the number of copies, if $\ket{\psi_0}=\ket{\psi_1}$, or $\ket{\psi_0}=\sqrt{p}\ket{0}+\sqrt{1-p}\ket{1}$ and $\ket{\psi_1}=\sqrt{p}\ket{0}-\sqrt{1-p}\ket{1}$ with $p\in(0,1)$. When $\ket{\psi_0}=\ket{\psi_1}$, the result is trivial. However, the same result for the case when $\ket{\psi_1}=\sqrt{p}\ket{0}-\sqrt{1-p}\ket{1}$ with $p\in(0,1)$ is interesting. It implies that $\mathbb{Z}_2$-twirling operation could make two orthogonal states indistinguishable for any $n$ copies. In other words, there exist two orthogonal states which can only be perfectly distinguished using asymmetric POVMs.
For example, let $p=\frac{1}{2}$. Such two pure qubit states are orthogonal. Nevertheless, after applying $\mathbb{Z}_2$-twirling operation they become indistinguishable as the two $\mathbb{Z}_2$-twirlings are the same. Instead, from Theorem \ref{th:main} (ii) we conclude that there exists some symmetric POVM as an optimal POVM such that the probability of type-II error reduces to zero in finite copies. 

Furthermore, by comparing our results in Theorem \ref{th:main} with quantum Stein's Lemma, it shows our results are more informative than the asymptotic behavior characterized by quantum Stein's Lemma, under the setting of $\mathbb{Z}_2$-invariant POVMs. For simplicity, we shall discuss according to the different cases listed in Table \ref{tab:pure}. First, when $\ket{\psi_0}=\ket{\psi_1}$, then $D(\rho_0||\rho_1)=0$, and the quantum Stein's Lemma is consistent with the result that the minimal probability of type-II error $\b_n(\epsilon)$ will maintain at some level even for an infinite number of copies. A slight difference in the setting of $\mathbb{Z}_2$-invariant POVMs is $\b_n(\epsilon)$ will maintain $1-\epsilon$ for any number of copies. Second, when $\{\ket{\psi_0},\ket{\psi_1}\}=\{\sqrt{p}\ket{0}+\sqrt{1-p}\ket{1},\sqrt{p}\ket{0}-\sqrt{1-p}\ket{1}\}$, it follows from Table \ref{tab:pure} that the minimal probability $\b_n(\epsilon)$ will also maintain $1-\epsilon$ for any number of copies in the setting of $\mathbb{Z}_2$-invariant POVMs, while quantum Stein's Lemma predicts it will converge to $0$ in the general setting. The comparison of this case reveals a qualitative difference. As we have discussed in the previous paragraph, to perfectly distinguish such two states $\sqrt{p}\ket{0}+\sqrt{1-p}\ket{1}$ and $\sqrt{p}\ket{0}-\sqrt{1-p}\ket{1}$, the implemented POVM must be an asymmetric one which is not invariant under the action of $\mathbb{Z}_2$-twirling. Third, in all other cases when $\ket{\psi_0}\neq \ket{\psi_1}$, the corresponding density matrices $\rho_0$ and $\rho_1$ have the same set of eigenvalues $\{0,1\}$, and $\supp(\rho_0)$ is generally not a subset of $\supp(\rho_1)$. It follows that $D(\rho_0||\rho_1)$ diverges, and quantum Stein's Lemma is consistent with the result that the minimal probability $\b_n(\epsilon)$ reduces to zero for a finite number of copies, which is also derived in the setting of $\mathbb{Z}_2$-invariant POVMs. It means we can complete the task of QHT between such two pure qubit states by implementing a $\mathbb{Z}_2$-invariant POVM, with finite resource. Therefore, our results are more suitable for practical scenarios than quantum Stein's Lemma.

Finally, we list the three essential lemmas to support Theorem \ref{th:main} (ii). In Lemma \ref{pp:generaltwoqubit} we consider the non-degenerate case, i.e., both $p,q\in(0,1)$. In Lemmas \ref{pp:reduced} and \ref{pp:reduced-2} we consider the two degenerate cases, i.e., either $p$ or $q$ belongs to $\{0,1\}$.

\begin{lemma}
\label{pp:generaltwoqubit}
Suppose $\ket{\psi_0}$ and $\ket{\psi_1}$ are two distinct pure qubit states which are expressed in Eq. \eqref{eq:twopuregeneric} with $p,q\in(0,1)$. Except the case when $p=q$ and $\phi=\pi$ both hold, for any given $\e\in(0,1)$,
there exits a large enough $n_\e$ such that $\forall n\geq n_\e$,
\beq
\label{eq:nonreduced-1.2}
\inf\{\tr(\cZ[\rho_1^{\ox n}]E):\tr(\cZ[\rho_0^{\ox n}]E)\geq 1-\e\}=0.
\eeq
\end{lemma}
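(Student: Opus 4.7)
Proof plan. The natural strategy is to build, for each large enough $n$, a specific $\bbZ_2$-invariant POVM effect $E$ that saturates the bound $\tr(\cZ[\rho_1^{\ox n}]E)=0$ while still satisfying $\tr(\cZ[\rho_0^{\ox n}]E)\geq 1-\e$. The key structural input is that, by Eq.~\eqref{eq:ncopyz2twirling-1}, both $\cZ[\rho_0^{\ox n}]$ and $\cZ[\rho_1^{\ox n}]$ have rank at most $2$ (and exactly $2$ when $p,q\in(0,1)$), with $\cZ[\rho_0^{\ox n}]$ supported on $\mathrm{span}\{\ket{0_p},\ket{1_p}\}$ and $\cZ[\rho_1^{\ox n}]$ supported on $\mathrm{span}\{\ket{0_q},\ket{1_q}\}$. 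First, I would let $Q$ be the orthogonal projector onto $\mathrm{supp}(\cZ[\rho_1^{\ox n}])=\mathrm{span}\{\ket{0_q},\ket{1_q}\}$ and take $E:=\I-Q$. Since $Q$ is a $\bbZ_2$-invariant operator (it commutes with $\om$ because $\ket{0_q}\in\cH_{even}$ and $\ket{1_q}\in\cH_{odd}$), so is $E$, and it automatically gives $\tr(\cZ[\rho_1^{\ox n}]E)=0$. The whole problem therefore reduces to showing
\begin{equation*}
\tr(\cZ[\rho_0^{\ox n}]Q)\;\longrightarrow\;0 \qquad \text{as } n\to\infty.
\end{equation*}

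Next, I would compute this overlap explicitly. Since $\ket{0_p},\ket{0_q}\in\cH_{even}$ and $\ket{1_p},\ket{1_q}\in\cH_{odd}$, the cross inner products $\langle 0_p|1_q\rangle$ and $\langle 1_p|0_q\rangle$ vanish, so
\begin{equation*}
\tr(\cZ[\rho_0^{\ox n}]Q)=\frac{1+(\Dp)^n}{2}\,\abs{\langle 0_p|0_q\rangle}^2+\frac{1-(\Dp)^n}{2}\,\abs{\langle 1_p|1_q\rangle}^2.
\end{equation*}
Using the explicit formulas \eqref{eq:z2ssrpure-1} and the binomial identities \eqref{eq:nchoosej}, and abbreviating $\alpha:=\sqrt{pq}+e^{i\phi}\sqrt{(1-p)(1-q)}$ and $\beta:=\sqrt{pq}-e^{i\phi}\sqrt{(1-p)(1-q)}$, a direct computation yields
\begin{equation*}
\langle 0_p|0_q\rangle=\frac{\alpha^n+\beta^n}{\sqrt{(1+(\Dp)^n)(1+(\Dq)^n)}},\qquad \langle 1_p|1_q\rangle=\frac{\alpha^n-\beta^n}{\sqrt{(1-(\Dp)^n)(1-(\Dq)^n)}}.
\end{equation*}
Since $p,q\in(0,1)$, we have $\abs{\Dp},\abs{\Dq}<1$, so the denominators converge to $1$ and all factors $(1\pm(\Dp)^n)$, $(1\pm(\Dq)^n)$ are bounded away from $0$ for all sufficiently large $n$. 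Therefore $\tr(\cZ[\rho_0^{\ox n}]Q)$ is controlled, up to constants, by $\abs{\alpha}^{2n}+\abs{\beta}^{2n}$.

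The crux is the inequalities $\abs{\alpha}<1$ and $\abs{\beta}<1$. Observing that $\abs{\alpha}=\abs{\langle\psi_0|\psi_1\rangle}$, one has $\abs{\alpha}=1$ iff $\ket{\psi_0}=\ket{\psi_1}$, which is ruled out by distinctness. For the second inequality, a direct expansion gives $\abs{\beta}^2=pq+(1-p)(1-q)-2\sqrt{pq(1-p)(1-q)}\cos\phi$; by AM--GM, $2\sqrt{pq(1-p)(1-q)}\leq p(1-q)+q(1-p)$ with equality iff $p=q$, and combined with $\cos\phi\geq -1$ this yields $\abs{\beta}\leq 1$ with equality iff simultaneously $p=q$ \emph{and} $\phi=\pi$, which is precisely the excluded case. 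Setting $\gamma:=\max(\abs{\alpha},\abs{\beta})<1$ then gives $\tr(\cZ[\rho_0^{\ox n}]Q)\leq C\gamma^{2n}$ for some constant $C$ depending only on $p,q$, so the quantity decays exponentially. Given $\e\in(0,1)$, we choose $n_\e$ so that $C\gamma^{2n_\e}\leq\e$; then for every $n\geq n_\e$, the effect $E=\I-Q$ witnesses $\tr(\cZ[\rho_0^{\ox n}]E)\geq 1-\e$ and $\tr(\cZ[\rho_1^{\ox n}]E)=0$, proving \eqref{eq:nonreduced-1.2}.

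I expect the main obstacle to be the careful verification of $\abs{\beta}<1$ and the simultaneous identification of exactly when equality holds, which is what selects the excluded case $p=q$, $\phi=\pi$ from part~(i) of Theorem~\ref{th:main}. The rest of the argument is a fairly mechanical combination of the explicit form of $\cZ[\proj{\psi}^{\ox n}]$ with the parity block-diagonal structure.
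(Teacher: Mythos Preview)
Your proof is correct and follows essentially the same approach as the paper's: both take $E$ to be the projector onto the orthogonal complement of $\mathrm{supp}\,\cZ[\rho_1^{\ox n}]=\mathrm{span}\{\ket{0_q},\ket{1_q}\}$ (the paper builds this via Gram--Schmidt as $\proj{2_L}+\proj{3_L}$, you write it directly as $\I-Q$), then reduce the problem to showing $\abs{\sqrt{pq}\pm e^{i\phi}\sqrt{(1-p)(1-q)}}<1$. Your packaging is a bit cleaner---identifying $\abs{\alpha}=\abs{\langle\psi_0|\psi_1\rangle}$ and giving a single AM--GM argument for $\abs{\beta}<1$ in place of the paper's case split on $p=q$ versus $p\neq q$---but the substance is the same.
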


As both $p,q\in(0,1)$, both $\rho_0^{\ox n}$ and $\rho_1^{\ox n}$ are indeed supported on the four-dimensional subspace $\lin\{\ket{0_p},\ket{1_p},\ket{0_q},\ket{1_q}\}$. Then we derive an orthonormal basis of such a four-dimensional subspace. In terms of this orthonormal basis we derive the optimal POVM namely $\{E_n,\I-E_n\}$ such that 
\beq
\label{eq:oppovm}
\lim_{n\to\infty}\tr(\cZ[\rho_0^{\ox n}]E_n)=1,
\eeq
while
\beq
\label{eq:oppovm-1}
\tr(\cZ[\rho_1^{\ox n}]E_n)=0~~\text{for any $n\geq 1$}.
\eeq
It leads to the conclusion given by Eq. \eqref{eq:nonreduced-1.2}.
We put the detailed proof of Lemma \ref{pp:generaltwoqubit} in Appendix \ref{sec:proof-l2}. Next, we consider the two degenerate cases.

\begin{lemma}
\label{pp:reduced}
Suppose $\ket{\psi_0}=\ket{0}$ or $\ket{1}$, and
$
\ket{\psi_1}=\sqrt{q}\ket{0}+\sqrt{1-q}\ket{1}
$
with $q\in(0,1)$.
Then for any given $\e\in(0,1)$, there exits a large enough $n_\e$ such that $\forall n\geq n_\e$,
\beq
\label{eq:reduced-1.2}
\inf\{\tr(\cZ[\rho_1^{\ox n}]E):\tr(\cZ[\rho_0^{\ox n}]E)\geq 1-\e\}=0.
\eeq
\end{lemma}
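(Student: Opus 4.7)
The starting observation is a simplification specific to this degenerate case: since $\ket{\psi_0}\in\{\ket 0,\ket 1\}$ is an eigenvector of the parity operator $\om$, the tensor power $\ket{\psi_0}^{\ox n}$ is already $\bbZ_2$-invariant. Hence $\cZ[\rho_0^{\ox n}]=\rho_0^{\ox n}$ is a \emph{rank-one} projector onto $\ket{v_0}=\ket 0^{\ox n}\in\cH_{even}$ (respectively $\ket{v_n}=\ket 1^{\ox n}$, which sits in $\cH_{even}$ or $\cH_{odd}$ according to the parity of $n$), in the notation of Eq.~\eqref{eq:z2ssrpure-2}. On the other hand, because $q\in(0,1)$, Eq.~\eqref{eq:ncopyz2twirling-1} gives a rank-two $\cZ[\rho_1^{\ox n}]$ with support $\lin\{\ket{0_q},\ket{1_q}\}$, split across the two parity sectors.

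The plan is then to exhibit a single rank-one projector $E_n$ that kills the type-II probability identically while leaving the type-I probability under $\e$ for all large $n$. Take first $\ket{\psi_0}=\ket 0$. The natural candidate is the orthogonal projection of $\ket{v_0}$ onto $\lin\{\ket{0_q},\ket{1_q}\}^\perp$, normalized; call the resulting unit vector $\ket{u_n}$ and set $E_n=\proj{u_n}$. Parity matching already removes the $\ket{1_q}$ component automatically (both $\ket{v_0}$ and $\ket{0_q}$ lie in $\cH_{even}$, while $\ket{1_q}\in\cH_{odd}$), so only the $\ket{0_q}$ direction has to be subtracted. By construction $\ket{u_n}\in\cH_{even}$ as well, so $E_n$ is itself $\bbZ_2$-invariant, and one gets $\tr(\cZ[\rho_1^{\ox n}]E_n)=0$ exactly together with $\tr(\cZ[\rho_0^{\ox n}]E_n)=1-|\braket{0_q}{v_0}|^2$.

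The entire argument then reduces to a one-line decay estimate. Reading the coefficient of $\ket{v_0}$ straight off the expansion of $\ket{0_q}$ in Eq.~\eqref{eq:z2ssrpure-1} gives
\beq
|\braket{0_q}{v_0}|^2=\frac{2q^n}{1+(\Delta q)^n},
\eeq
which tends to $0$ as $n\to\infty$ because $q\in(0,1)$ forces $q^n\to 0$ and $|\Delta q|=|2q-1|<1$ forces $(\Delta q)^n\to 0$. Hence for any fixed $\e\in(0,1)$ there is an $n_\e$ with $|\braket{0_q}{v_0}|^2\leq\e$ for every $n\geq n_\e$, so $E_n$ is feasible in Eq.~\eqref{eq:reduced-1.2} and realizes the value zero, forcing the infimum to be zero.

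The case $\ket{\psi_0}=\ket 1$ is treated by the same recipe after replacing $\ket{v_0}$ with $\ket{v_n}=\ket 1^{\ox n}$: one subtracts the overlap with whichever of $\ket{0_q},\ket{1_q}$ lives in the same parity sector as $\ket{v_n}$, and the relevant quantity becomes $2(1-q)^n/\bigl(1\pm(\Delta q)^n\bigr)$, vanishing for the same reason. I do not expect any real obstacle here; in contrast to Lemma~\ref{pp:generaltwoqubit}, where $\rho_0^{\ox n}$ and $\rho_1^{\ox n}$ each occupy both parity sectors and one has to assemble a four-dimensional orthonormal basis, here $\cZ[\rho_0^{\ox n}]$ is rank one and the construction collapses to the single overlap estimate above.
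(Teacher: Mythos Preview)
Your proposal is correct and follows essentially the same approach as the paper's proof: both construct the rank-one projector onto the normalization of $\ket{v_0}-\braket{0_q}{v_0}\ket{0_q}$ (what the paper calls $\ket{2_L}$ via Gram--Schmidt on $\{\ket{0_q},\ket{1_q},\ket{v_0}\}$), verify it annihilates $\cZ[\rho_1^{\ox n}]$, and reduce everything to the decay of $|\braket{0_q}{v_0}|^2=2q^n/(1+(\Delta q)^n)$, with the analogous treatment for $\ket{\psi_0}=\ket 1$. Your remark that $E_n$ lands in a single parity sector and is therefore already $\bbZ_2$-invariant is a nice clarification the paper leaves implicit.
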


\begin{lemma}
\label{pp:reduced-2}
Suppose 
$
\ket{\psi_0}=\sqrt{p}\ket{0}+\sqrt{1-p}\ket{1}
$
with $p\in(0,1)$, and
$\ket{\psi_1}=\ket{0}$ or $\ket{1}$. 
Then for any given $\e\in(0,1)$, there exits a large enough $n_\e$ such that $\forall n\geq n_\e$,
\beq
\label{eq:reduced-2-1.2}
\inf\{\tr(\cZ[\rho_1^{\ox n}]E):\tr(\cZ[\rho_0^{\ox n}]E)\geq 1-\e\}=0.
\eeq
\end{lemma}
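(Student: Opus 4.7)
The plan is to mirror the strategy used for Lemma \ref{pp:reduced}, exploiting the special structure of $\rho_1^{\ox n}$ when $\ket{\psi_1}\in\{\ket{0},\ket{1}\}$: in both cases $\rho_1^{\ox n}$ is a rank-one projector onto one of the basis vectors from Eq. \eqref{eq:z2ssrpure-2}, namely $\rho_1^{\ox n}=\proj{v_0}$ when $\ket{\psi_1}=\ket{0}$ and $\rho_1^{\ox n}=\proj{v_n}$ when $\ket{\psi_1}=\ket{1}$. Since each $\ket{v_j}$ is an eigenvector of the global parity operator, $\rho_1^{\ox n}$ is already $\mathbb{Z}_2$-invariant and $\cZ[\rho_1^{\ox n}]=\rho_1^{\ox n}$.

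The key observation is then that the choice $E_n:=\I-\rho_1^{\ox n}$ is a valid POVM element that automatically annihilates $\rho_1^{\ox n}$, so $\tr(\cZ[\rho_1^{\ox n}]E_n)=0$. It remains only to verify that $\tr(\cZ[\rho_0^{\ox n}]E_n)\geq 1-\e$ for all sufficiently large $n$. Using the decomposition of $\cZ[\rho_0^{\ox n}]$ from Eq. \eqref{eq:ncopyz2twirling-1} together with the explicit expansions of $\ket{0_p},\ket{1_p}$ in Eq. \eqref{eq:z2ssrpure-1}, I expect the weight factors $\tfrac{1}{2}(1\pm(\Dp)^n)$ to cancel cleanly against the normalization of the relevant component, yielding $\bra{v_0}\cZ[\rho_0^{\ox n}]\ket{v_0}=p^n$ and $\bra{v_n}\cZ[\rho_0^{\ox n}]\ket{v_n}=(1-p)^n$. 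This matches intuition: $p^n$ (resp. $(1-p)^n$) is the probability that $n$ independent computational-basis measurements of $\ket{\psi_0}$ all return $\ket{0}$ (resp. $\ket{1}$), which is precisely the event in which $\rho_0^{\ox n}$ is indistinguishable from $\rho_1^{\ox n}$ via this $E_n$.

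Consequently $\tr(\cZ[\rho_0^{\ox n}]E_n)=1-p^n$ or $1-(1-p)^n$, and since $p\in(0,1)$ both terms decay geometrically to zero. Taking $n_\e:=\lceil \log\e/\log p\rceil$ when $\ket{\psi_1}=\ket{0}$, or $n_\e:=\lceil \log\e/\log(1-p)\rceil$ when $\ket{\psi_1}=\ket{1}$, forces $\tr(\cZ[\rho_0^{\ox n}]E_n)\geq 1-\e$ for all $n\geq n_\e$. Combined with $\tr(\cZ[\rho_1^{\ox n}]E_n)=0$ and the nonnegativity of the objective, this shows the infimum in Eq. \eqref{eq:reduced-2-1.2} vanishes.

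There is no serious obstacle: because $\rho_1^{\ox n}$ is a rank-one $\mathbb{Z}_2$-invariant projector, the optimal zero-type-II-error POVM element can be written down by inspection, and the proof reduces to bookkeeping with Eq. \eqref{eq:z2ssrpure-1}. The only subtlety is that $\ket{v_n}$ sits in the support of $\ket{0_p}$ when $n$ is even but in $\ket{1_p}$ when $n$ is odd; fortunately, both parities yield the same value $(1-p)^n$ after the $\tfrac{1}{2}(1\pm(\Dp)^n)$ cancellation, so no case split enters the final bound.
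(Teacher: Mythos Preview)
Your proof is correct and reaches exactly the same endpoint as the paper: the POVM element that projects away from $\ket{v_0}$ (or $\ket{v_n}$) yields $\tr(\cZ[\rho_1^{\ox n}]E_n)=0$ and $\tr(\cZ[\rho_0^{\ox n}]E_n)=1-p^n$ (respectively $1-(1-p)^n$), with the same thresholds $n_\e=\lceil\log_p\e\rceil$ and $\lceil\log_{1-p}\e\rceil$ that the paper later records in Proposition~\ref{pp:ne}(i).

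The route, however, is more economical than the paper's. The paper first runs Gram--Schmidt on $\{\ket{v_0},\ket{0_p},\ket{1_p}\}$ (or $\{\ket{v_n},\ket{0_p},\ket{1_p}\}$, with a parity case-split for odd/even $n$), writes out the $3\times 3$ matrices of $\sigma_0,\sigma_1$, and then takes $E=\proj{1_L}+\proj{2_L}$, which is the projector onto the orthogonal complement of $\ket{v_0}$ (or $\ket{v_n}$) within that three-dimensional support. You bypass all of this by observing directly that $\rho_1^{\ox n}$ is itself a rank-one $\bbZ_2$-invariant projector, so $E_n=\I-\rho_1^{\ox n}$ is available by inspection; your $E_n$ agrees with the paper's on the support of $\sigma_0,\sigma_1$ and hence gives identical traces. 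The advantage of your shortcut is that the even/odd-$n$ split for $\ket{\psi_1}=\ket{1}$ never needs to be written out---as you note, the factors $\tfrac{1}{2}(1\pm(\Dp)^n)$ cancel against the normalization in Eq.~\eqref{eq:z2ssrpure-1} in either parity sector, giving $(1-p)^n$ uniformly. The paper's more explicit construction, on the other hand, parallels the proof of Lemma~\ref{pp:generaltwoqubit} and makes the restriction to the relevant low-dimensional subspace visible, which is stylistically consistent with the rest of Appendix material.
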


In Lemma \ref{pp:reduced} and Lemma \ref{pp:reduced-2}, $\rho_0^{\ox n}$ and $\rho_1^{\ox n}$ are indeed supported on the three-dimensional subspace. Using the similar technique in the proof of Lemma \ref{pp:generaltwoqubit} we can prove the two degenerate cases analogously. We also put the detailed proofs of Lemma \ref{pp:reduced} and Lemma \ref{pp:reduced-2} in Appendices \ref{sec:proof-l3} and \ref{sec:proof-l4} respectively.

To sum up, all possible cases in Theorem \ref{th:main} (ii) are included in Lemmas \ref{pp:generaltwoqubit} - \ref{pp:reduced-2}, and have been discussed. Hence, we have shown the validity of Theorem \ref{th:main}. It is worth noting that we also specify the optimal POVM $\{E_n,\I-E_n\}$ to minimize the probability of type-II error in the corresponding proofs of Lemmas \ref{pp:generaltwoqubit} - \ref{pp:reduced-2}.

In the following subsection we will investigate the minimal number of copies $n_\e$ such that $\tr(\cZ[\rho_0^{\ox n}]E_n)\geq 1-\e$ and $\tr(\cZ[\rho_1^{\ox n}]E_n)=0$ when the optimal POVM is performed. We call such minimal number of copies as the critical number of copies.

\subsection{The critical number of copies for pefect distinguishing}
\label{subsec:critical}

It follows from Theorem \ref{th:main} that in most cases we can perfectly determine that the system is prepared in the state $\rho_0$ from two distinct pure qubit states with a finite number of copies using a symmetric POVM. From the perspective of resource conservation, it is better to prepare as few copies of states as possible while we can perfectly distinguish between the two pure qubit states. It inspires us to further derive the critical number of copies in our asymmetric QHT task. Note that the critical number of copies depends on the optimal POVM selected. Here, we select the optimal POVM given in the proofs of Lemmas \ref{pp:generaltwoqubit} - \ref{pp:reduced-2} respectively. By fixing the optimal POVM, we minimize the number of copies $n_\e$ such that $\tr(\cZ[\rho_0^{\ox n}]E_n)\geq 1-\e$ and $\tr(\cZ[\rho_1^{\ox n}]E_n)=0$.

\begin{proposition}
\label{pp:ne}
Suppose $\ket{\psi_0}=\sqrt{p}\ket{0}+\sqrt{1-p}\ket{1}$ and $\ket{\psi_1}=\sqrt{q}\ket{0}+e^{i\phi}\sqrt{1-q}\ket{1}$ are two distinct pure qubit states except the case when $p=q$ and $\phi=\pi$. 

(i) Suppose $p\in(0,1)$, and $\ket{\psi_1}\in\{\ket{0},\ket{1}\}$. When $\ket{\psi_1}=\ket{0}$, the critical number of copies $n_\e$ is equal to $\lceil\log_{p}\e\rceil$. When $\ket{\psi_1}=\ket{1}$, the critical number of copies $n_\e$ is equal to $\lceil\log_{1-p}\e\rceil$.

(ii) In all other cases, the critical number of copies has the formula as 
\beq
\label{eq:ppne-1}
n_\e=\log_{\la_{max}^2}(\e)+\bo\left(\log_{\la_{max}^2}(\e)\right)~~ \text{as}~~ \e\to 0, 
\eeq
where $\la_{max}$ is calculated by
$$\sqrt{pq+(1-p)(1-q)+2\sqrt{pq(1-p)(1-q)}\abs{\cos\phi}}.$$ 
Particularly, if $\ket{\psi_0}=\ket{0}$ or $\ket{1}$ and $q=\frac{1}{2}$, the critical number of copies $n_\e$ is equal to $\lceil\log_{\frac{1}{2}}\e+1\rceil$.
\end{proposition}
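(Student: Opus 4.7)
The plan is to exploit the explicit optimal POVMs $\{E_n,\I-E_n\}$ constructed in the proofs of Lemmas \ref{pp:generaltwoqubit}--\ref{pp:reduced-2}, each of the form $E_n=\I-\Pi_1^{(n)}$ with $\Pi_1^{(n)}$ the projector onto the support of $\cZ[\rho_1^{\ox n}]$. This choice automatically enforces $\tr(\cZ[\rho_1^{\ox n}]E_n)=0$, so the task reduces to identifying the smallest $n$ for which $\tr(\cZ[\rho_0^{\ox n}]\Pi_1^{(n)})\le\e$.

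For case (i), the assumption $\ket{\psi_1}\in\{\ket{0},\ket{1}\}$ makes $\rho_1^{\ox n}$ already $\bbZ_2$-invariant, so $\Pi_1^{(n)}=\proj{\psi_1}^{\ox n}$ and this projector is fixed by $\cZ$. Using the self-adjointness of $\cZ$, the target trace equals $\tr(\rho_0^{\ox n}\proj{\psi_1}^{\ox n})=|\langle\psi_0|\psi_1\rangle|^{2n}$, which is $p^n$ for $\ket{\psi_1}=\ket{0}$ and $(1-p)^n$ for $\ket{\psi_1}=\ket{1}$. Imposing this to be at most $\e$ and taking the ceiling of the resulting logarithm yields the two formulas.

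For case (ii) in general, Eq.~\eqref{eq:ncopyz2twirling-1} applied with the parameters $(q,\phi)$ gives $\Pi_1^{(n)}=\proj{0_q}+\proj{1_q}$. Because $\ket{0_p},\ket{0_q}$ are supported in $\cH_{even}$ while $\ket{1_p},\ket{1_q}$ are supported in $\cH_{odd}$, the cross overlaps $\langle 0_q|1_p\rangle$ and $\langle 1_q|0_p\rangle$ vanish, and the target trace collapses to
\beq
\tfrac{1+(\Dp)^n}{2}\,|\langle 0_q|0_p\rangle|^2+\tfrac{1-(\Dp)^n}{2}\,|\langle 1_q|1_p\rangle|^2.
\eeq
Expanding the inner products via Eqs.~\eqref{eq:z2ssrpure-1} and \eqref{eq:nchoosej} expresses each one as $(\sqrt{pq}+e^{-i\phi}\sqrt{(1-p)(1-q)})^n\pm(\sqrt{pq}-e^{-i\phi}\sqrt{(1-p)(1-q)})^n$ divided by a square root of normalization factors that tend to $1$. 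The squared moduli of the two complex bases are $pq+(1-p)(1-q)\pm 2\sqrt{pq(1-p)(1-q)}\cos\phi$, whose maximum is exactly $\la_{max}^2$. Consequently the trace decays as $\la_{max}^{2n}$ up to a bounded multiplicative correction, and inverting this bound gives $n_\e=\log_{\la_{max}^2}\e+\bo(\log_{\la_{max}^2}\e)$ as $\e\to 0$. For the explicit sub-case $\ket{\psi_0}\in\{\ket{0},\ket{1}\}$ with $q=\tfrac12$, one has $\Dq=0$ so the normalization becomes exact, only the overlap matching the parity of $\ket{\psi_0}$ survives, and the trace evaluates to exactly $2^{1-n}$, giving $n_\e=\lceil\log_{1/2}\e+1\rceil$.

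The main obstacle is the sharp identification $\la_{max}^2=pq+(1-p)(1-q)+2\sqrt{pq(1-p)(1-q)}|\cos\phi|$: the absolute value on $\cos\phi$ arises because the two overlap-like quantities $|\sqrt{pq}\pm e^{-i\phi}\sqrt{(1-p)(1-q)}|$ correspond respectively to the even and odd parity sectors, and the larger of the two drives the exponential decay. Verifying that the sub-leading term in the weighted sum cannot conspire with the normalization factors to slow the effective decay below $\la_{max}^{2n}$ requires bounding the ratio of numerator to denominator by positive constants for $n$ large enough, which is the delicate step of the argument beyond the leading-order asymptotics.
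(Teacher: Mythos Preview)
Your proposal is correct and follows essentially the same route as the paper: identify the optimal POVM as $E_n=\I-\Pi_1^{(n)}$ with $\Pi_1^{(n)}$ the support projector of $\cZ[\rho_1^{\ox n}]$, reduce the problem to bounding $\tr(\cZ[\rho_0^{\ox n}]\Pi_1^{(n)})$, and show this quantity is $\Theta(\la_{max}^{2n})$. The ``delicate step'' you flag at the end is precisely where the paper invests its effort, splitting into the sub-cases $\cos\phi=0$ versus $\cos\phi\neq 0$ (and degenerate $p$ or $q$) to obtain explicit two-sided constants for the multiplicative correction; your outline is accurate, but that case analysis is not optional and cannot be absorbed into a single uniform estimate.
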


We also present the detailed proof of Proposition \ref{pp:ne} in Appendix \ref{sec:proof-prop}. It is crucial to obtain the critical number of copies $n_\e$ as fewer copies require less resouce. Thus, Proposition \ref{pp:ne} indicates how much resource we need at least to achieve our goal in the hypothesis testing between two pure qubit states.


\section{Quantum Hypothesis testing between a pure and a maximally mixed qubit state with parity}
\label{sec:mixedhyre}

In this section we extend our study to the context of the QHT between a pure and a maximally mixed qubit state. Specifically, we replace one of the two pure qubit states in Sec. \ref{sec:z2ssr} with a maximally mixed qubit state, and consider how the hypothesis testing relative entropy asymptotically behaves between such two states. If the pure state is $\ket{0}$ or $\ket{1}$, then both the pure state and maximally mixed state are invariant under $\cZ_2$-twirling. In this case, one can obtain the hypothesis testing relative entropy directly from Stein's Lemma. Thus, we only need to consider the pure state in the superposition of $\ket{0}$ and $\ket{1}$. The asymptotic behavior of the minimal probability of type-II error is characterized by Theorem \ref{thm:pandm}. Similar to Theorem \ref{th:main}, we also describe the main results in Theorem \ref{thm:pandm} by Table \ref{tab:p+mm} for convenience.

\begin{table*}[htbp]
\normalsize
\caption{QHT between a pure and a maximally mixed qubit state using $\mathbb{Z}_2$-invariant POVMs}
\centering
\begin{tabular}{|c|c|c|}
\hline
$\rho_0$ & $\rho_1$ & $\inf\{\tr(\cZ[\rho_1^{\ox n}]E):\tr(\cZ[\rho_0^{\ox n}]E)\geq 1-\e\}$ \\
\hline
$\proj{\psi}$, $\ket{\psi}=\sqrt{p}\ket{0}+\sqrt{1-p}\ket{1}$ & $\frac{1}{2} I$ & $\lim\limits_{n\to\infty}\frac{D_H^{\e}(\cZ[\rho_0^{\ox n}]||\cZ[\rho_1^{\ox n}])}{n}=1$ \\
\hline
$\frac{1}{2} I$ & $\proj{\psi}$, $\ket{\psi}=\sqrt{p}\ket{0}+\sqrt{1-p}\ket{1}$ & reduce to zero when $n\geq \lceil\log(1/\e)\rceil+1$ \\
\hline
\end{tabular}
\label{tab:p+mm}
\end{table*}

\begin{theorem}
\label{thm:pandm}
Let $\ket{\psi}=\sqrt{p}\ket{0}+\sqrt{1-p}\ket{1}$ for $p\in(0,1)$.

(i) Suppose $\rho_0=\proj{\psi}$ and $\rho_1=\frac{1}{2}I$. Then 
\beq
\label{eq:pandm-main1}
\lim\limits_{n\to\infty}\frac{D_H^{\e}(\cZ[\rho_0^{\ox n}]||\cZ[\rho_1^{\ox n}])}{n}=1.
\eeq

(ii) Suppose $\rho_0=\frac{1}{2}I$ and $\rho_1=\proj{\psi}$. Then for any given $\e\in(0,1)$ and for all $n\geq \lceil\log(1/\e)\rceil+1$,
\beq
\label{eq:pandm-main2}
\bal
\inf\{\tr(\cZ[\rho_1^{\ox n}]E):\tr(\cZ[\rho_0^{\ox n}]E)\geq 1-\e\}=0.
\eal
\eeq
\end{theorem}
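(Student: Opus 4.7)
The plan rests on Eq.~(\ref{eq:ncopyz2twirling-1}), which makes $\cZ[\proj{\psi}^{\ox n}]$ a rank-two operator with support $\lin\{\ket{0_p},\ket{1_p}\}$, together with the fact that the maximally mixed state is already $\mathbb{Z}_2$-invariant, so $\cZ\left[\left(\tfrac12 I\right)^{\ox n}\right]=2^{-n}I$. Because $\ket{0_p}\in\cH_{even}$ and $\ket{1_p}\in\cH_{odd}$ are eigenvectors of $\om$, every effect built from $\{\proj{0_p},\proj{1_p}\}$ or from its orthogonal complement in $\cH^{\ox n}$ is automatically $\mathbb{Z}_2$-invariant, so the restricted POVM requirement is handled for free.

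For part (i) I would obtain the upper bound $\limsup_n \tfrac{1}{n}D_H^{\epsilon}(\cZ[\rho_0^{\ox n}]\|\cZ[\rho_1^{\ox n}])\leq D(\rho_0\|\rho_1)=1$ by applying the data processing inequality~(\ref{eq:dpiqhtre}) to the channel $\cZ$ and then invoking the quantum Stein's lemma (Lemma~\ref{le:qstein}); note $D(\proj{\psi}\|\tfrac12 I)=\log 2=1$. For the matching lower bound I would test with the $\mathbb{Z}_2$-invariant effect $E_n=\proj{0_p}+\proj{1_p}$, namely the support projector of $\cZ[\rho_0^{\ox n}]$. A direct computation yields $\tr(\cZ[\rho_0^{\ox n}]E_n)=1$ and $\tr(\cZ[\rho_1^{\ox n}]E_n)=2^{1-n}$, hence $\b_n(\epsilon)\leq 2^{1-n}$ and $\liminf_n \tfrac{1}{n}D_H^{\epsilon}(\cZ[\rho_0^{\ox n}]\|\cZ[\rho_1^{\ox n}])\geq 1$, which closes the gap.

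For part (ii) the constraint $\tr(\cZ[\rho_1^{\ox n}]E)=0$ forces $E$ to live in the orthogonal complement of $\lin\{\ket{0_p},\ket{1_p}\}$; the natural candidate is the $\mathbb{Z}_2$-invariant projector $E_n=I-\proj{0_p}-\proj{1_p}$. Then $\tr(\cZ[\rho_0^{\ox n}]E_n)=(2^n-2)/2^n=1-2^{1-n}$, so the type-I constraint $1-2^{1-n}\geq 1-\epsilon$ rearranges to $n\geq 1+\log(1/\epsilon)$, giving the threshold $n\geq\lceil\log(1/\epsilon)\rceil+1$ stated in the theorem.

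There is no substantial conceptual obstacle; the main point requiring care is that in part (i) neither direction is tight on its own. The data-processing bound rules out rates above $1$ but does not by itself exhibit an achievable $\mathbb{Z}_2$-invariant strategy, while the explicit rank-two projector yields only $\b_n\leq 2^{1-n}$ whose prefactor cannot be improved without the converse. It is worth emphasizing that the precise spectral weights $(1\pm(\Dp)^n)/2$ in Eq.~(\ref{eq:ncopyz2twirling-1}) never enter either argument: both steps use only that the support of $\cZ[\proj{\psi}^{\ox n}]$ is exactly the two-dimensional subspace $\lin\{\ket{0_p},\ket{1_p}\}$ for every $n\geq 1$, a fact that makes the threshold in part (ii) independent of $p$.
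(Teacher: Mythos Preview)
Your proposal is correct. Part~(ii) is essentially identical to the paper's argument: both choose the projector onto the orthogonal complement of $\lin\{\ket{0_p},\ket{1_p}\}$ and read off the same threshold $n\geq\lceil\log(1/\e)\rceil+1$.

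Part~(i), however, takes a genuinely different route. The paper computes $\b_n(\e)$ \emph{exactly} via the SDP dual of Eq.~(\ref{eq:dualsdp}): it writes out the piecewise-linear function $f_{n,\e}(r)$ using the eigenvalues $\tfrac{1\pm(\Dp)^n}{2}$ and $2^{-n}$, maximizes over $r$, obtains closed-form expressions for $\b_n(\e)$ depending on whether $\e<1/2$ or $\e\geq 1/2$, and then takes the limit. Your argument instead sandwiches the rate: the upper bound comes from data processing under $\cZ$ followed by Stein's lemma (using $D(\proj{\psi}\|\tfrac12 I)=1$), and the lower bound from the single explicit test $E_n=\proj{0_p}+\proj{1_p}$. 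Your approach is shorter and more conceptual, and, as you note, never touches the weights $(1\pm(\Dp)^n)/2$; what it gives up is the exact finite-$n$ value of $\b_n(\e)$ that the paper's dual computation produces along the way.
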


\begin{proof}
It follows from Eq. \eqref{eq:ncopyz2twirling-1} that $\cZ[\proj{\psi}^{\ox n}]=\frac{1+(\Dp)^n}{2}\proj{0_p}+\frac{1-(\Dp)^n}{2}\proj{1_p}$, where $\ket{0_p}$ and $\ket{1_p}$ are mutually orthonormal. Thus, one can expand $\{\ket{0_p},\ket{1_p}\}$ to an orthonormal basis of $(\bbC^{2})^{\ox n}$
\beq
\label{eq:pandmmbasis}
\{\ket{0_p},\ket{1_p},\ket{2_p},\cdots,\ket{(2^n-1)_p}\}.
\eeq
In terms of this orthonormal basis, we obtain that
\beq
\label{eq:pandmmbasis-1}
\cZ[(\frac{1}{2}I)^{\ox n}]=\frac{1}{2^n} I^{\ox n}=\frac{1}{2^n}\sum_{j=0}^{2^n-1}\proj{j_p}.
\eeq

(i) Suppose $\rho_0=\proj{\psi}$ and $\rho_1=\frac{1}{2}I$. According to the dual relation Eq. \eqref{eq:dualsdp}, we obtain that $\b_n(\e)\equiv\max_{r\geq 0}f_{n,\e}(r)$, where
\beq
\label{eq:pandmopt-1}
\bal
f_{n,\e}(r)&:=\frac{1}{2}\big[1+(1-2\e)r-\norm{r\cZ[\rho_0^{\ox n}]-\cZ[\rho_1^{\ox n}]}_1\big]\\
&=\frac{1}{2}\Big[1+(1-2\e)r-\Big(\abs{\frac{r(1+(\Dp)^n)}{2}-\frac{1}{2^n}}\\
&~~~~+\abs{\frac{r(1-(\Dp)^n)}{2}-\frac{1}{2^n}}+\frac{1}{2^n}(2^n-2)\Big)\Big]\\
&=\frac{1}{2}\Big[\frac{1}{2^{n-1}}+(1-2\e)r-\Big(\abs{\frac{r(1+(\Dp)^n)}{2}-\frac{1}{2^n}}\\
&~~~~+\abs{\frac{r(1-(\Dp)^n)}{2}-\frac{1}{2^n}}\Big)\Big].
\eal
\eeq

If $\Dp\geq 0$, we can further obtain that
\beq
\label{eq:pandmopt-2}
f_{n,\e}(r)=
\begin{cases}
(1-\e)r, & r\in [0,r_1],\\
\frac{1}{2^n}+\frac{(1-2\e-(\Dp)^n)r}{2}, & r\in [r_1,r_2],\\
\frac{1}{2^{n-1}}-\e r, & r\in[r_2,+\infty),
\end{cases}
\eeq
where $r_1=(\frac{1}{2^n})/(\frac{(1+(\Dp)^n)}{2})$ and $r_2=(\frac{1}{2^n})/(\frac{(1-(\Dp)^n)}{2})$.
A straightforward calculation yields that 
\beq
\label{eq:pandmopt-3}
\max_{r\geq 0} f_{n,\e}(r)=
\begin{cases}
\frac{1-(\Dp)^n-\e}{(1-(\Dp)^n)2^{n-1}}, & \e\in(0,1/2),\\
\frac{1-\e}{(1+(\Dp)^n)2^{n-1})}, & \e\in[1/2,1).
\end{cases}
\eeq
The first part, namely $\e\in(0,1/2)$, holds for sufficiently large $n$ such that $1-2\e-(\Dp)^n>0$. It follows that the hypothesis testing relative entropy between $\cZ[\rho_0^{\ox n}]$ and $\cZ[\rho_1^{\ox n}]$, i.e. $D_H^{\e}(\cZ[\rho_0^{\ox n}]||\cZ[\rho_1^{\ox n}])$, can be expressed as the following piecewise function:
\beq
\label{eq:pandmopt-4}
\begin{cases}
-\log\frac{1-(\Dp)^n-\e}{(1-(\Dp)^n)2^{n-1}}, & \e\in(0,1/2),\\
-\log\frac{1-\e}{(1+(\Dp)^n)2^{n-1})}, & \e\in[1/2,1).
\end{cases}
\eeq
Thus, by calculating the limit we obtain 
$$\lim_{n\to\infty}\frac{D_H^{\e}(\cZ[\rho_0^{\ox n}]||\cZ[\rho_1^{\ox n}])}{n}=1.$$

If $\Dp<0$, we can similarly obtain Eq. \eqref{eq:pandmopt-4} by just replacing $(\Dp)^n$ with $(\abs{\Dp})^n$. Thus, we also conclude that $\lim\limits_{n\to\infty}\frac{D_H^{\e}(\cZ[\rho_0^{\ox n}]||\cZ[\rho_1^{\ox n}])}{n}=1$ for $\Dp<0$. Therefore, assertion (i) is valid.

(ii) Suppose $\rho_0=\frac{1}{2}I$ and $\rho_1=\proj{\psi}$. One can choose the binary POVM as $\{E\equiv\sum_{j=2}^{2^n-1}\proj{j_p}, I^{\ox n}-E\}$. It follows that
\beq
\label{eq:pandmii-1}
\tr(\cZ[\rho_0^{\ox n}] E)=\frac{2^n-2}{2^n},~~\text{while}~~\tr(\cZ[\rho_1^{\ox n}] E)=0.
\eeq
A straightforward calculation yields that $\tr(\cZ[\rho_0^{\ox n}] E)\geq 1-\e$ for all $n\geq \lceil\log(1/\e)\rceil+1$. Thus, by definition assertion (ii) is valid. 

This completes the proof.
\end{proof}

Different from Theorem \ref{th:main}, the minimal probability of type-II error decreases in the number of copies when $\rho_1$ is a maximally mixed state, and the decreasing rate is given by Eq. \eqref{eq:pandm-main1}. However, when $\rho_0$ is a maximally mixed state, we derive a conclusion that there exists some symmetric POVM as an optimal POVM such that the probability of type-II error reduces to zero in a finite number of copies, which is similar to Theorem \ref{th:main} (ii). Furthermore, the critical number of copies in this case has been specified to $\lceil\log(1/\e)\rceil+1$ by fixing an optimal POVM.


\section{Conclusions}
\label{sec:con}

In this paper we investigated the asymmetric quantum hypothesis testing between two qubit states within the resource theory of parity. Differently from the general QHT, we considered the operational limitations arising from the lack of a reference frame for chirality. This reference frame is associated with the group $\bbZ_2$ consisting of the identity transformation and the parity transformation. According to quantum resource theory, the POVMs that can be adopted in our task of QHT are those that respect the symmetry of the problem, i.e. the $\bbZ_2$-invariant POVMs. Hence, the focused problem was to figure out how the minimal probability of type-II error, or equivalently the hypothesis testing relative entropy, asymptotically behaves as the number of copies increases, when the POVMs are required to be $\bbZ_2$-invariant. By virtue of an equivalent characterization of $\bbZ_2$-invariant operations called $\bbZ_2$-twirling, we transformed the problem to minimizing the probability given by $\tr(\cZ[\rho_1^{\ox n}]E)$ for an arbitrary POVM $\{E,\I-E\}$, where $\rho_1$ represents the alternative hypothesis. Then, via an optimization duality, this optimization problem is further equivalent to maximizing a simple function of one real variable. 

We started by considering the QHT between two pure qubit states. We first explicitly formulated the expression of the $\bbZ_2$-twirling on an arbitrary pure qubit state. By analyzing the specific optimization problem in terms of a one-variable function, we completely solved the focused problem for two pure qubit states by Theorem \ref{th:main}. It is worth noting that our results showed that the minimal probability of type-II error reduces to zero in a finite number of copies, if the $\bbZ_2$-twirlings of such two pure qubit states are different. Physically, it means that the observer can perfectly decide the prepared quantum system is in state $\rho_1$ by only finite copies. Furthermore, from the perspective of resource conservation, we also derived the critical number of copies such that the probability of type-II error reduces to zero in Proposition \ref{pp:ne}. Finally, we replaced one of the two pure states with a maximally mixed state as an attempt to generalize the results on two pure states to the context of mixed states. In this case, we characterized the asymptotic behavior of the minimal probability of type-II error by Theorem \ref{thm:pandm}. 

Here, we provide two possible directions for future research. First, it is interesting to further investigate the asymmetric QHT between two mixed qubit states, or more generally between two mixed qudit states, within the resource theory of parity. Second, we may consider the operational limitations arising from the lack of some quantum reference frame associated with a generic group $G$. That is, the performed POVMs in the task of QHT can only be $G$-invariant.

\section*{Acknowledgments}
This work was developed during YS's visit to Prof. Gilad Gour under the program of CSC. YS appreciates Prof. Gilad Gour very much for the valuable discussion and comments.
The authors are very grateful to Milán Mosonyi and Masahito Hayashi for their nice comments.
The authors also want to thank the anonymous reviewers for helping improve the quality of this work.
YS and LC were supported by the NNSF of China (Grant No. 11871089), and the Fundamental Research Funds for the Central Universities (Grant Nos. JUSRP123029, ZG216S2005). 
CMS acknowledges the support of the Natural Sciences and Engineering Research Council of Canada (NSERC) through the Discovery Grant “The power of quantum resources” RGPIN-2022-03025 and the Discovery Launch Supplement DGECR-2022-00119.


\appendix

\section{Proof of Lemma \ref{pp:generaltwoqubit}}
\label{sec:proof-l2}

\textbf{Proof of Lemma \ref{pp:generaltwoqubit}.}
For both $p,q\in(0,1)$, we obtain
\beq
\label{eq:generic-4}
\bal
&\cZ[\proj{\psi_0}^{\ox n}]=\frac{1+(\Dp)^n}{2}\proj{0_p}+\frac{1-(\Dp)^n}{2}\proj{1_p}\\
&\cZ[\proj{\psi_1}^{\ox n}]=\frac{1+(\Dq)^n}{2}\proj{0_q}+\frac{1-(\Dq)^n}{2}\proj{1_q},\\
\eal
\eeq
where 
\beq
\label{eq:generic-4.1}
\bal
\ket{0_p}&:=\sum_{j-even} \sqrt{\frac{{n\choose j}p^{n-j}(1-p)^j}{\frac{1}{2}(1+(\Dp)^n)}}\ket{v_j},\\
\ket{1_p}&:=\sum_{j-odd}  \sqrt{\frac{{n\choose j}p^{n-j}(1-p)^j}{\frac{1}{2}(1-(\Dp)^n)}}\ket{v_j},\\
\ket{0_q}&:=\sum_{j-even} e^{i(j\phi)} \sqrt{\frac{{n\choose j}q^{n-j}(1-q)^j}{\frac{1}{2}(1+(\Dq)^n)}}\ket{v_j},\\
\ket{1_q}&:=\sum_{j-odd} e^{i(j\phi)} \sqrt{\frac{{n\choose j}q^{n-j}(1-q)^j}{\frac{1}{2}(1-(\Dq)^n)}}\ket{v_j},
\eal
\eeq
and $\ket{v_j}$ is given by Eq. \eqref{eq:z2ssrpure-2}.
First of all, we need to find an orthonormal basis, namely $\{\ket{0_L},\ket{1_L},\ket{2_L},\ket{3_L}\}$, of $\lin\{\ket{0_p},\ket{1_p},\ket{0_q},\ket{1_q}\}$.
According to the Gram-Schmidt process we formulate the orthogonal vectors as
\beq
\label{eq:generic-5}
\bal
\ket{0_L}&:=\ket{0_q},\\
\ket{1_L}&:=\ket{1_q},\\
\ket{u_2}&=\ket{0_p}-\braket{0_q}{0_p}\ket{0_q},\\
\ket{u_3}&=\ket{1_p}-\braket{1_q}{1_p}\ket{1_q}.\\
\eal
\eeq
Denote $\mu_1=\sqrt{pq}$ and $\mu_2=\sqrt{(1-p)(1-q)}$. A calculation yields that 
\begin{widetext}
\beq
\label{eq:generic-5.1}
\bal
\braket{0_q}{0_p}&=\sqrt{\frac{1}{\frac{1}{4}(1+(\Dp)^n)(1+(\Dq)^n)}}\cdot\\
&\big(\sum_{j,j'-even}e^{-i(j\phi)}\sqrt{{n\choose j}{n\choose j'}q^{n-j}p^{n-j'}(1-q)^{j}(1-p)^{j'}}\braket{v_{j}}{v_{j'}}\big)\\
&=\sqrt{\frac{1}{\frac{1}{4}(1+(\Dp)^n)(1+(\Dq)^n)}}\sum_{j-even}{n\choose j}e^{-i(j\phi)}\mu_1^{n-j}\mu_2^j\\
&=\sqrt{\frac{1}{\frac{1}{4}(1+(\Dp)^n)(1+(\Dq)^n)}}\frac{(\mu_1+e^{-i\phi}\mu_2)^n+(\mu_1-e^{-i\phi}\mu_2)^n}{2}\\
&=\frac{(\mu_1+e^{-i\phi}\mu_2)^n+(\mu_1-e^{-i\phi}\mu_2)^n}{\sqrt{(1+(\Dp)^n)(1+(\Dq)^n)}}.
\eal
\eeq
\end{widetext}
Similarly, we obtain
\beq
\label{eq:generic-5.11}
\braket{1_q}{1_p}=\frac{(\mu_1+e^{-i\phi}\mu_2)^n-(\mu_1-e^{-i\phi}\mu_2)^n}{\sqrt{(1-(\Dp)^n)(1-\Dq)^n)}}.
\eeq
Let 
\beq
\label{eq:gen-5x}
\braket{0_q}{0_p}:=a_n e^{i\a_n}\quad \text{and}\quad \braket{1_q}{1_p}:=b_n e^{i\b_n},
\eeq 
where $a_n,b_n$ are the modulus of $\braket{0_q}{0_p}$ and $\braket{1_q}{1_p}$ respectively. Since
$$\norm{u_2}=\sqrt{1-a_n^2},\quad \norm{u_3}=\sqrt{1-b_n^2},$$
we conclude that $0\leq a_n,b_n\leq 1$. Next, we consider such two cases:
(i) $p=q$ and $\phi\neq 0,\pi$;
(ii) $p\neq q$.

Case (i) If $p=q$ and $\phi\neq 0,\pi$, both $\abs{p+e^{-i\phi}(1-p)}$ and $\abs{p-e^{-i\phi}(1-p)}$ are less than one. Thus, we obtain
$$\lim_{n\to\infty} a_n=\lim_{n\to\infty} b_n=0.$$

Case (ii) If $p\neq q$, we conclude that
\beq
\label{eq:generic-x5}
\bal
&~~~~\abs{\mu_1+e^{-i\phi}\mu_2}\leq\abs{\mu_1}+\abs{\mu_2}=\mu_1+\mu_2\\
&<\frac{p+q}{2}+\frac{1-p+1-q}{2}=1.
\eal
\eeq
Similarly one can verify that $\abs{\mu_1-e^{-i\phi}\mu_2}<1$. Thus, we obtain 
$$\lim_{n\to\infty} a_n=\lim_{n\to\infty} b_n=0.$$

Therefore, in both Cases (i) and (ii), $\lim\limits_{n\to\infty} a_n=\lim\limits_{n\to\infty} b_n=0$. Let
\beq
\label{eq:generic-5.3}
\bal
\ket{2_L}&=\frac{1}{\norm{u_2}}\ket{u_2}=\frac{1}{\sqrt{1-a_n^2}}\ket{0_p}-\frac{a_ne^{i\a_n}}{\sqrt{1-a_n^2}}\ket{0_q},\\
\ket{3_L}&=\frac{1}{\norm{u_3}}\ket{u_3}=\frac{1}{\sqrt{1-b_n^2}}\ket{1_p}-\frac{b_ne^{i\b_n}}{\sqrt{1-b_n^2}}\ket{1_q}.
\eal
\eeq 
It follows that
\beq
\label{eq:generic-5.4}
\bal
\ket{0_p}&=\sqrt{1-a_n^2}\ket{2_L}+a_ne^{i\a_n}\ket{0_L},\\
\ket{1_p}&=\sqrt{1-b_n^2}\ket{3_L}+b_ne^{i\b_n}\ket{1_L}.
\eal
\eeq 
Let $\sigma_0:=\cZ[\proj{\psi_0}^{\ox n}]$ and $\sigma_1:=\cZ[\proj{\psi_1}^{\ox n}]$. Write $\sigma_0$ and $\sigma_1$ in the matrix form as
\beq
\label{eq:generic-6.1}
\bal
\sigma_1&=
\bma
\frac{1+(\Dq)^n}{2} & 0 & 0 & 0 \\
0 & \frac{1-(\Dq)^n}{2} & 0 & 0 \\
0 & 0 & 0 & 0 \\
0 & 0 & 0 & 0 
\ema,\\
\sigma_0&=
\bma
m_{11} & 0 & m_{13} & 0 \\
0 & m_{22} & 0 & m_{24} \\
m_{13}^* & 0 & m_{33} & 0 \\
0 & m_{24}^* & 0 & m_{44}
\ema,
\eal
\eeq
where
\beq
\label{eq:generic-6.2}
\bal
&m_{11}=\frac{1+(\Dp)^n}{2}a_n^2, \quad m_{33}=\frac{1+(\Dp)^n}{2}(1-a_n^2),\\
&m_{13}=\frac{1+(\Dp)^n}{2}(a_n\sqrt{1-a_n^2}e^{i\a_n}),\\
&m_{22}=\frac{1-(\Dp)^n}{2}b_n^2,\quad m_{44}=\frac{1-(\Dp)^n}{2}(1-b_n^2),\\
&m_{24}=\frac{1-(\Dp)^n}{2}(b_n\sqrt{1-b_n^2}e^{i\b_n}). \\
\eal
\eeq
In order to obtain the minimum of the error probability defined in Eq. \eqref{eq:qhtre-sym}, we may take the POVM element to be $E=\proj{2_L}+\proj{3_L}$. We obtain that
\beq
\label{eq:generic-7.1}
\bal
\tr(\sigma_1E)&=0,\\
\tr(\sigma_0E)&=m_{33}+m_{44}.
\eal
\eeq
Since $\lim\limits_{n\to\infty}(\Dp)^n=0$, and $\lim\limits_{n\to\infty}a_n=\lim\limits_{n\to\infty}b_n=0$, we conclude that
\beq
\label{eq:generic-7.2}
\bal
&~~~~\lim_{n\to\infty} m_{33}+m_{44}\\
&=\lim_{n\to\infty}\Big(\frac{1+(\Dp)^n}{2}(1-a_n^2)+\frac{1-(\Dp)^n}{2}(1-b_n^2)\Big)\\
&=1.
\eal
\eeq
It implies that for any given $0<\e<1$, there exists a large enough $n_\e$ such that $\forall n\geq n_\e$, $\tr(\sigma_0 E)=m_{33}+m_{44}\geq 1-\e$. Thus, we obtain that $\forall n\geq n_\e$,
$$ 
\inf\{\tr(\cZ[\rho_1^{\ox n}]E):\tr(\cZ[\rho_0^{\ox n}]E)\geq 1-\e\}=0.
$$
The optimal POVM to reach the above infimum can be chosen as 
$$\{\proj{2_L}+\proj{3_L},\proj{0_L}+\proj{1_L}\}.$$

This completes the proof.
\qed

\section{Proof of Lemma \ref{pp:reduced}}
\label{sec:proof-l3}

\textbf{Proof of Lemma \ref{pp:reduced}.}
Since $q\in(0,1)$, it follows that 
\beq
\label{eq:reduced-3}
\cZ[\rho_1^{\ox n}]=\frac{1+(\Dq)^n}{2}\proj{0_q}+\frac{1-(\Dq)^n}{2}\proj{1_q},
\eeq
where $\ket{0_q}$ and $\ket{1_q}$ are expressed in Eq. \eqref{eq:z2ssrpure-1}.

Case (i) In this case we suppose $\ket{\psi_0}=\ket{0}$. Let $\ket{v_0}=\ket{0}^{\ox n}$. Then we use the Gram-Schmidt process to find an orthonormal basis, namely $\{\ket{0_L},\ket{1_L},\ket{2_L}\}$ for $\lin\{\ket{0_q},\ket{1_q},\ket{v_0}\}$. A straightforward calculation yields that 
\beq
\label{eq:reduced-4}
\bal
\ket{0_L}&:=\ket{0_q},\\
\ket{1_L}&:=\ket{1_q},\\
\ket{u_2}&:=\ket{v_0}-\braket{0_q}{v_0}\ket{0_q}\\
&=\ket{v_0}-\sqrt{\frac{q^n}{\frac{1}{2}(1+(\Dq)^n)}}\ket{0_q},\\
\ket{2_L}&:=\frac{1}{\norm{u_2}}\ket{u_2}\\
&=\sqrt{\frac{\frac{1}{2}(1+(\Dq)^n)}{\frac{1}{2}(1+(\Dq)^n)-q^n}}\ket{u_2}.
\eal
\eeq
It follows that 
$$
\ket{v_0}=\sqrt{\frac{\frac{1}{2}(1+(\Dq)^n)-q^n}{\frac{1}{2}(1+(\Dq)^n)}}\ket{2_L}+\sqrt{\frac{q^n}{\frac{1}{2}(1+(\Dq)^n)}}\ket{0_L}.
$$
Let $\sigma_0:=\cZ[\rho_0^{\ox n}]=\proj{v_0}$ and $\sigma_1:=\cZ[\rho_1^{\ox n}]$.
Denote 
\beq
\label{eq:xn-1.1}
x_n=\frac{q^n}{\frac{1}{2}(1+(\Dq)^n)},
\eeq and write $\sigma_0$ and $\sigma_1$ respectively in the matrix form as 
\beq
\label{eq:reducedcheck-1}
\bal
\sigma_0&=
\bma
x_n & 0 & \sqrt{(1-x_n)x_n}\\
0 & 0 & 0 \\
\sqrt{(1-x_n)x_n} & 0 & 1-x_n
\ema,\\
\sigma_1&=
\bma
\frac{1}{2}(1+(\Dq)^n) & 0 & 0\\
0 & \frac{1}{2}(1-(\Dq)^n) & 0 \\
0 & 0 & 0
\ema.
\eal
\eeq
Since $q\in(0,1)$, we obtain $\lim\limits_{n\to\infty} x_n=0$. Take the projector element to be $E=\proj{2_L}$. It follows that 
\beq
\label{eq:reducedcheck-2}
\tr(\sigma_0 E)=1-x_n(\to 1),\quad
\tr(\sigma_1 E)=0.
\eeq

Case (ii) In this case we suppose $\ket{\psi_0}=\ket{1}$. Let $\ket{v_n}=\ket{1}^{\ox n}$. Similarly, we can find an orthonormal basis of $\lin\{\ket{0_q},\ket{1_q},\ket{v_n}\}$ in the following process. For even $n$, a direct calculation yields that 
\beq
\label{eq:reduced-5}
\bal
\ket{0_L}&:=\ket{0_q},\\
\ket{1_L}&:=\ket{1_q},\\
\ket{u_2}&:=\ket{v_n}-\braket{0_q}{v_n}\ket{0_q}\\
&=\ket{v_n}-\sqrt{\frac{(1-q)^n}{\frac{1}{2}(1+(\Dq)^n)}}\ket{0_q},\\
\ket{2_L}&:=\frac{1}{\norm{u_2}}\ket{u_2}\\
&=\sqrt{\frac{\frac{1}{2}(1+(\Dq)^n)}{\frac{1}{2}(1+(\Dq)^n)-(1-q)^n}}\ket{u_2}.
\eal
\eeq
For odd $n$, a direct calculation yields that 
\beq
\label{eq:reduced-6}
\bal
\ket{0_L}&:=\ket{0_q},\\
\ket{1_L}&:=\ket{1_q},\\
\ket{u_2}&:=\ket{v_n}-\braket{1_q}{v_n}\ket{1_q}\\
&=\ket{v_n}-\sqrt{\frac{(1-q)^n}{\frac{1}{2}(1-(\Dq)^n)}}\ket{1_q},\\
\ket{2_L}&:=\frac{1}{\norm{u_2}}\ket{u_2}\\
&=\sqrt{\frac{\frac{1}{2}(1-(\Dq)^n)}{\frac{1}{2}(1-(\Dq)^n)-(1-q)^n}}\ket{u_2}.
\eal
\eeq
Then we denote 
\beq
\label{eq:xn-1.2}
x_n=\frac{(1-q)^n}{\frac{1}{2}(1+(-1)^n(\Dq)^n)}.
\eeq
Since $q\in(0,1)$, we obtain $\lim\limits_{n\to \infty}x_n=0$.
Then from \eqref{eq:reduced-5} - \eqref{eq:reduced-6} it follows that
\beq
\label{eq:reduced-7.1}
\ket{v_n}=
\begin{cases}
\sqrt{x_n}\ket{0_L}+\sqrt{1-x_n}\ket{2_L}, & \text{even $n$}, \\
\sqrt{x_n}\ket{1_L}+\sqrt{1-x_n}\ket{2_L}, & \text{odd $n$}. \\
\end{cases}
\eeq
Let $\sigma_0:=\cZ[\rho_0^{\ox n}]=\proj{v_n}$ and $\sigma_1:=\cZ[\rho_1^{\ox n}]$. For even $n$, we formulate $\sigma_0$ as
\beq
\label{eq:reduced-7.2-1}
\sigma_0=
\bma
x_n & 0 & e^{-i(n\phi)}\sqrt{(1-x_n)x_n}\\
0 & 0 & 0 \\
e^{i(n\phi)}\sqrt{(1-x_n)x_n} & 0 & 1-x_n
\ema, 
\eeq
and for odd $n$, we formulate $\sigma_0$ as
\beq
\label{eq:reduced-7.2-2}
\sigma_0=
\bma
0 & 0 & 0\\
0 & x_n & e^{-i(n\phi)}\sqrt{(1-x_n)x_n} \\
0 & e^{i(n\phi)}\sqrt{(1-x_n)x_n} & 1-x_n
\ema.
\eeq
For any $n\geq 1$, we can formulate $\sigma_1$ as
\beq
\label{eq:reduced-7.2-3}
\sigma_1=
\bma
\frac{1}{2}(1+(\Dq)^n) & 0 & 0\\
0 & \frac{1}{2}(1-(\Dq)^n) & 0 \\
0 & 0 & 0
\ema.
\eeq
In order to obtain the minimum of the error probability defined in Eq. \eqref{eq:qhtre-sym}, we may take the measurement element to be $E=\proj{2_L}$, where the expression of $\ket{2_L}$ is given by \eqref{eq:reduced-5} or \eqref{eq:reduced-6} depending on even $n$ or odd $n$. It follows that
\beq
\label{eq:reducedcheck-8}
\tr(\sigma_0 E)=1-x_n(\to 1),\quad
\tr(\sigma_1 E)=0.
\eeq

Combining \eqref{eq:reducedcheck-2} and \eqref{eq:reducedcheck-8} in Case (i) and Case (ii) respectively, we conclude that no matter $\ket{\psi_0}=\ket{0}$ or $\ket{1}$, there always exists a large enough $n_\e$ such that $\forall n\geq n_\e$, 
$$\inf\{\tr(\sigma_1 E):\tr(\sigma_0 E)\geq 1-\e\}=0.$$
To reach the above infimum, the POVM can be taken as $\{\proj{2_L}, \proj{0_L}+\proj{1_L}\}$.

This completes the proof.
\qed

\section{Proof of Lemma \ref{pp:reduced-2}}
\label{sec:proof-l4}

\textbf{Proof of Lemma \ref{pp:reduced-2}.}
The technique here is similar to that used in Lemma \ref{pp:reduced}. Let $\ket{v_0}=\ket{0}^{\ox n}$ and $\ket{v_n}=\ket{1}^{\ox n}$. First of all we need to find the orthonormal basis of $\lin\{\ket{0_p},\ket{1_p},\ket{v_0}\}$ or $\lin\{\ket{0_p},\ket{1_p},\ket{v_n}\}$ depending on $\ket{\psi_1}=\ket{0}$ or $\ket{1}$.

Case (i) In this case we consider $\ket{\psi_1}=\ket{0}$. A direct calculation yields that
\beq
\label{eq:reducedpsi1-1}
\bal
\ket{0_L}&:=\ket{v_0},\\
\ket{u_1}&:=\ket{0_p}-\braket{v_0}{0_p}\ket{v_0},\\
\ket{2_L}&:=\ket{1_p}\\
\ket{1_L}&:=\frac{1}{\norm{u_1}}\ket{u_1}\\
&=\sqrt{\frac{\frac{1}{2}(1+(\Dp)^n)}{\frac{1}{2}(1+(\Dp)^n)-p^n}}\ket{u_1}.
\eal
\eeq
It follows that
$$
\ket{0_p}=\sqrt{\frac{\frac{1}{2}(1+(\Dp)^n)-p^n}{\frac{1}{2}(1+(\Dp)^n)}}\ket{1_L}+\sqrt{\frac{p^n}{\frac{1}{2}(1+(\Dp)^n)}}\ket{0_L}.
$$
Denote 
\beq
\label{eq:xn-2.1}
x_n=\frac{p^n}{\frac{1}{2}(1+(\Dp)^n)}.
\eeq Let $\sigma_0=\cZ[\rho_0^{\ox n}]$ and  $\sigma_1=\cZ[\rho_1^{\ox n}]$. Then we write $\sigma_0$ and $\sigma_1$ in the matrix form as 
\beq
\label{eq:reducedpsi1-2}
\bal
\sigma_0&=
\left(
\begin{smallmatrix}
\frac{1+(\Dp)^n}{2}\cdot x_n & \frac{1+(\Dp)^n}{2}\cdot\sqrt{x_n(1-x_n)} & 0 \\
\frac{1+(\Dp)^n}{2}\cdot\sqrt{x_n(1-x_n)} & \frac{1+(\Dp)^n}{2}\cdot(1-x_n) & 0 \\
0 & 0 & \frac{1-(\Dp)^n}{2}
\end{smallmatrix}
\right),\\
\sigma_1&=
\bma
1 & 0 & 0 \\
0 & 0 & 0 \\
0 & 0 & 0 \\
\ema.
\eal
\eeq
Take the projector element to be $E=\proj{1_L}+\proj{2_L}$. It follows that
\beq
\label{eq:reduced2error-1}
\bal
\tr(\sigma_0E)&=1-\frac{(1+(\Dp)^n)}{2}\cdot x_n=1-p^n~ (\to 1),\\
\tr(\sigma_1E)&=0.
\eal
\eeq

Case (ii) In this case we consider $\ket{\psi_1}=\ket{1}$. For $n$ is even, a direct calculation yields that
\beq
\label{eq:reducedpsi1-3}
\bal
\ket{0_L}&:=\ket{v_n},\\
\ket{u_1}&:=\ket{0_p}-\braket{v_n}{0_p}\ket{v_n},\\
&=\ket{0_p}-\sqrt{\frac{(1-p)^n}{\frac{1}{2}(1+(\Dp)^n)}}\ket{v_n},\\
\ket{2_L}&:=\ket{1_p}\\
\ket{1_L}&:=\frac{1}{\norm{u_1}}\ket{u_1}\\
&=\sqrt{\frac{\frac{1}{2}(1+(\Dp)^n)}{\frac{1}{2}(1+(\Dp)^n)-(1-p)^n}}\ket{u_1}.
\eal
\eeq
Thus, $\{\ket{0_L},\ket{1_L},\ket{2_L}\}$ in \eqref{eq:reducedpsi1-3} is an orthonormal basis of $\lin\{\ket{v_n},\ket{0_p},\ket{1_p}\}$ when $n$ is even. For $n$ is odd, a direct calculation yields that
\beq
\label{eq:reducedpsi1-4}
\bal
\ket{0_L}&:=\ket{v_n},\\
\ket{1_L}&:=\ket{0_p}\\
\ket{u_2}&:=\ket{1_p}-\braket{v_n}{1_p}\ket{v_n},\\
&=\ket{1_p}-e^{in\phi}\sqrt{\frac{(1-p)^n}{\frac{1}{2}(1-(\Dp)^n)}}\ket{v_n},\\
\ket{2_L}&:=\frac{1}{\norm{u_2}}\ket{u_2}\\
&=\sqrt{\frac{\frac{1}{2}(1-(\Dp)^n)}{\frac{1}{2}(1-(\Dp)^n)-(1-p)^n}}\ket{u_2}.
\eal
\eeq
Thus, $\{\ket{0_L},\ket{1_L},\ket{2_L}\}$ in \eqref{eq:reducedpsi1-4} is an orthonormal basis of $\lin\{\ket{v_n},\ket{0_p},\ket{1_p}\}$ when $n$ is odd. Then we denote 
\beq
\label{eq:xn-2.2}
x_n=\frac{(1-p)^n}{\frac{1}{2}(1+(-1)^n(\Dp)^n)}.
\eeq
Since $p\in(0,1)$, we obtain $\lim_{n\to\infty}x_n=0$. It follows from \eqref{eq:reducedpsi1-3}-\eqref{eq:reducedpsi1-4} that
\beq
\label{eq:reducedpsi1-6}
\begin{cases}
\ket{0_p}=\sqrt{x_n}\ket{0_L}+\sqrt{1-x_n}\ket{1_L}, & \text{even $n$}, \\
\ket{1_p}=\sqrt{x_n}\ket{0_L}+\sqrt{1-x_n}\ket{2_L}, & \text{odd $n$}. \\
\end{cases}
\eeq
Let $\sigma_0=\cZ[\rho_0^{\ox n}]$ and  $\sigma_1=\cZ[\rho_1^{\ox n}]$. For even $n$ we formulate $\sigma_0$ as 
\beq
\label{eq:reducedpsi1-7.1}
\sigma_0=
\left(
\begin{smallmatrix}
\frac{1+(\Dp)^n}{2}\cdot x_n & \frac{1+(\Dp)^n}{2}\cdot\sqrt{x_n(1-x_n)} & 0 \\
\frac{1+(\Dp)^n}{2}\cdot \sqrt{x_n(1-x_n)} & \frac{1+(\Dp)^n}{2}\cdot(1-x_n) & 0 \\
0 & 0 & \frac{1-(\Dp)^n}{2}
\end{smallmatrix}
\right),
\eeq
and for odd $n$ we formulate $\sigma_0$ as
\beq
\label{eq:reducedpsi1-7.2}
\sigma_0=
\left(
\begin{smallmatrix}
\frac{1-(\Dp)^n}{2}\cdot x_n & 0 & \frac{1-(\Dp)^n}{2}\cdot\sqrt{x_n(1-x_n)} \\
0 & \frac{1+(\Dp)^n}{2} & 0 \\
\frac{1-(\Dp)^n}{2}\cdot \sqrt{x_n(1-x_n)} & 0 & \frac{1-(\Dp)^n}{2}\cdot(1-x_n)
\end{smallmatrix}
\right).
\eeq
For any $n\geq 1$, we can formulate $\sigma_1$ as
\beq
\label{eq:reducedpsi1-7.3}
\sigma_1=
\bma
1 & 0 & 0 \\
0 & 0 & 0 \\
0 & 0 & 0 \\
\ema.
\eeq
In order to obtain the minimum of the error probability defined in Eq. \eqref{eq:qhtre-sym}, we may take the measurement element to be $\proj{1_L}+\proj{2_L}$, where the expressions of $\ket{1_L}$ and $\ket{2_L}$ are given by \eqref{eq:reducedpsi1-3} or \eqref{eq:reducedpsi1-4} depending on even $n$ or odd $n$.
It follows that
\beq
\label{eq:reduced2error-2}
\tr(\sigma_0E)=1-(1-p)^n (\to 1),\quad \tr(\sigma_1E)=0.
\eeq

Combining the above two cases we conclude that for any given $\e\in(0,1)$, there exits a large enough $n_\e$ such that $\forall n\geq n_\e$,
$$
\inf\{\tr(\cZ[\rho_1^{\ox n}]E):\tr(\cZ[\rho_0^{\ox n}]E)\geq 1-\e\}=0.
$$

This completes the proof.
\qed

\section{Proof of Proposition \ref{pp:ne}}
\label{sec:proof-prop}

\textbf{Proof of Proposition \ref{pp:ne}.}
(i) This case corresponds to Lemma \ref{pp:reduced-2}. According to Eqs. \eqref{eq:reduced2error-1} and \eqref{eq:reduced2error-2} we determine the critical $n_\e$ such that $\forall n\geq n_\e$, $p^n\leq \e$ when $\ket{\psi_1}=\ket{0}$, and $(1-p)^n\leq \e$ when $\ket{\psi_1}=\ket{1}$. A direct calculation yields the assertion (i).

Next we consider the assertion (ii) in three cases.

(ii.a) Here, we consider $\ket{\psi_0}=\ket{0}$ or $\ket{1}$, and $\ket{\psi_1}=\sqrt{q}\ket{0}+\sqrt{1-q}\ket{1}$ with $q\in(0,1)$. This case corresponds to Lemma \ref{pp:reduced}. According to Eqs. \eqref{eq:reducedcheck-2} and \eqref{eq:reducedcheck-8} we determine the critical $n_\e$ such that $x_n\leq \e$, $\forall n\geq n_\e$, where $x_n$ is given by Eq. \eqref{eq:xn-1.1} or Eq. \eqref{eq:xn-1.2} depending on $\ket{\psi_0}=\ket{0}$ or $\ket{1}$. Specifically, if $\ket{\psi_0}=\ket{0}$, we shall consider the following inequality:
\beq
\label{ineq:ne-3}
\frac{q^n}{\frac{1}{2}(1+(\Dq)^n)}\leq \e.
\eeq
If $\Dq> 0$, i.e., $\frac{1}{2}<q<1$ it follows that $\forall n>1,~1<1+(\Dq)^n<1+\Dq=2q$. Then we obtain
\beq
\label{eq:ne-3.1}
\frac{q^{n_\e}}{2q}<\frac{q^{n_\e}}{1+(\Dq)^{n_\e}}<q^{n_\e}.
\eeq
Since $n_\e$ is an integer, we conclude that $\log_q(\e)+1\leq n_\e\leq\log_q(\e)+\log_q(\frac{1}{2})+1$.
One can verify $\lim\limits_{\e\to 0}\frac{n_\e}{\log_q\e}=1$. It implies that $n_\e=\log_q\e+\bo(\log_q\e)$ as $\e\to 0$. 
If $\Dq= 0$, i.e., $q=\frac{1}{2}$, it follows from \eqref{ineq:ne-3} that $n_\e=\lceil\log_{\frac{1}{2}}\e+1\rceil$.
If $\Dq< 0$, i.e., $0<q<\frac{1}{2}$, it follows that $\forall n>1,~2q=1+\Dq< 1+(\Dq)^n<2$. Then we obtain
\beq
\label{eq:ne-3.2}
\frac{q^{n_\e}}{2}<\frac{q^{n_\e}}{1+(\Dq)^{n_\e}}<\frac{q^{n_\e}}{2q}.
\eeq
Since $n_\e$ is an integer, we conclude that $\log_q(\e)\leq n_\e\leq\log_q(\e)+2$. It also implies that $n_\e=\log_q\e+\bo(\log_q\e)$ as $\e\to 0$.

If $\ket{\psi_0}=\ket{1}$, we shall consider the following inequality:
$$
\frac{(1-q)^n}{\frac{1}{2}(1+(-1)^n(\Dq)^n)}\leq \e.
$$
Similar to the above discussion, we conclude that if $\Dq>0$, then $\log_{1-q}(\e)\leq n_\e\leq \log_{1-q}(\e)+2$; if $\Dq=0$, then $n_\e=\lceil\log_{\frac{1}{2}}\e+1\rceil$; if $\Dq<0$, then $\log_{1-q}(\e)+1\leq n_\e\leq\log_{1-q}(\e)+\log_{1-q}(\frac{1}{2})+1$. Thus, when $\Dq\neq 0$, we conclude $n_\e=\log_{1-q}(\e)+\bo(\log_{1-q}(\e))$.

(ii.b) Here, we consider $\ket{\psi_0}=\sqrt{p}\ket{0}+\sqrt{1-p}\ket{1}$ and $\ket{\psi_1}=\sqrt{p}\ket{0}+e^{i\phi}\sqrt{1-p}\ket{1}$ with $\phi\neq 0,\pi$. This corresponds to Lemma \ref{pp:generaltwoqubit}. According to Eq. \eqref{eq:generic-7.2} we should determine the critical $n_\e$ such that $\forall n\geq n_\e$,
\beq
\label{eq:gnericne}
\frac{1+(\Dp)^n}{2}\cdot a_n^2+\frac{1-(\Dp)^n}{2}\cdot b_n^2\leq \e,
\eeq
where $a_n$ and $b_n$ are given by Eq. \eqref{eq:gen-5x}. 
Note that $p=q$ and $\phi\neq 0,\pi$ in this case. It follows that 
\beq
\label{eq:ineq-1.1}
\bal
a_n&=\frac{\abs{(p+e^{-i\phi}(1-p))^n+(p-e^{-i\phi}(1-p))^n}}{1+(\Dp)^n},\\
b_n&=\frac{\abs{(p+e^{-i\phi}(1-p))^n-(p-e^{-i\phi}(1-p))^n}}{1-(\Dp)^n}.
\eal
\eeq
Denote the two modulus as 
\beq
\label{eq:ineq-1.2}
\bal
\la_1&:=\abs{p+e^{-i\phi}(1-p)}\\
&=\sqrt{p^2+(1-p)^2+2p(1-p)\cos\phi},\\
\la_2&:=\abs{p-e^{-i\phi}(1-p)}\\
&=\sqrt{p^2+(1-p)^2-2p(1-p)\cos\phi}.
\eal
\eeq

We first consider the case when $\cos\phi=0$. It implies that $\la_1=\la_2$. Let $\la\equiv\la_1=\la_2$. Then we write $p+e^{-i\phi}(1-p)=\la e^{i\t}$ and $p-e^{-i\phi}(1-p)=\la e^{-i\t}$. 
It follows from Eq. \eqref{eq:ineq-1.1} that
\beq
\label{eq:ineq-1.3-1}
\bal
a_n&=\frac{\abs{\la^ne^{in\t}+\la^ne^{-in\t}}}{1+(\Dp)^n}=\frac{2\la^n\cos(n\t)}{1+(\Dp)^n},\\
b_n&=\frac{\abs{\la^ne^{in\t}-\la^ne^{-in\t}}}{1-(\Dp)^n}=\frac{2\la^n\sin(n\t)}{1-(\Dp)^n}.
\eal
\eeq
On the one hand, we obtain that  
\beq
\label{eq:ineq-1.3-2}
\bal
&~~~~\frac{1+(\Dp)^n}{2}\cdot a_n^2+\frac{1-(\Dp)^n}{2}\cdot b_n^2\\
&=\frac{2\la^{2n}\cos^2(n\t)}{1+(\Dp)^n}+\frac{2\la^{2n}\sin^2(n\t)}{1-(\Dp)^n}\\
&\geq\la^{2n}.
\eal
\eeq
On the other hand, we obtain that
\beq
\label{eq:ineq-1.3-3}
\bal
&~~~~\frac{1+(\Dp)^n}{2}\cdot a_n^2+\frac{1-(\Dp)^n}{2}\cdot b_n^2\\
&=\frac{2\la^{2n}\cos^2(n\t)}{1+(\Dp)^n}+\frac{2\la^{2n}\sin^2(n\t)}{1-(\Dp)^n}\\
&\leq \frac{\la^{2n}}{p(1-p)}.
\eal
\eeq
Since $n_\e$ is an integer, it follows that $\log_{\la^2}(\e)\leq n_{\e}\leq \log_{\la^2}(p(1-p)\e)+1$. It implies $\lim\limits_{\e\to 0}\frac{n_\e}{\log_{\la^2}(\e)}=1$, and thus $n_\e=\log_{\la^2}(\e)+\bf{o}(\log_{\la^2}(\e))$ as $\e\to 0$.

Second we consider the case when $\cos\phi\neq 0$. It implies that $\la_1\neq\la_2$. We write $p+e^{-i\phi}(1-p)=\la_1e^{i\t_1}$ and $p-e^{-i\phi}(1-p)=\la_2e^{i\t_2}$.
It follows from Eq. \eqref{eq:ineq-1.1} that
\beq
\label{eq:ineq-1.3}
\bal
a_n&=\frac{\abs{\la_1^ne^{in\t_1}+\la_2^ne^{in\t_2}}}{1+(\Dp)^n},\\
b_n&=\frac{\abs{\la_1^ne^{in\t_1}-\la_2^ne^{in\t_2}}}{1-(\Dp)^n}.
\eal
\eeq
From the triangle inequality we obtain that 
\beq
\label{eq:ineq-1.4}
\bal
\frac{\abs{\la_1^n-\la_2^n}}{1+(\Dp)^n} \leq a_n \leq \frac{\la_1^n+\la_2^n}{1+(\Dp)^n},\\
\frac{\abs{\la_1^n-\la_2^n}}{1-(\Dp)^n} \leq b_n \leq \frac{\la_1^n+\la_2^n}{1-(\Dp)^n}.\\
\eal
\eeq
It follows that
\beq
\label{eq:ineq-1.5}
\bal
&~~~~\frac{(\la_1^n-\la_2^n)^2}{2(1+(\Dp)^n)}+\frac{(\la_1^n-\la_2^n)^2}{2(1-(\Dp)^n)}\\
&\leq\frac{1+(\Dp)^n}{2}\cdot a_n^2+\frac{1-(\Dp)^n}{2}\cdot b_n^2\\
&\leq\frac{(\la_1^n+\la_2^n)^2}{2(1+(\Dp)^n)}+\frac{(\la_1^n+\la_2^n)^2}{2(1-(\Dp)^n)}.
\eal
\eeq
Let $\la_{max}:=\max\{\la_1,\la_2\}$ and $\la_{min}:=\min\{\la_1,\la_2\}$. It follows that
$\la_{max}=\sqrt{p^2+(1-p)^2+2p(1-p)\abs{\cos\phi}}$
and
$\la_{min}=\sqrt{p^2+(1-p)^2-2p(1-p)\abs{\cos\phi}}$.
On the one hand, we obtain that
\beq
\label{eq:ineq-1.6}
\bal
&~~~~\frac{1+(\Dp)^n}{2}\cdot a_n^2+\frac{1-(\Dp)^n}{2}\cdot b_n^2\\
&\geq \frac{(\la_1^n-\la_2^n)^2}{2(1+(\Dp)^n)}+\frac{(\la_1^n-\la_2^n)^2}{2(1-(\Dp)^n)}\\
&=\frac{(\la_1^n-\la_2^n)^2}{(1+(\Dp)^n)(1-(\Dp)^n)}\\
&> (\la_1^n-\la_2^n)^2>(1-\frac{\la_{min}}{\la_{max}})^2\la_{max}^{2n}.
\eal
\eeq
On the other hand, we obtain that 
\beq
\label{eq:ineq-1.7}
\bal
&~~~~\frac{1+(\Dp)^n}{2}\cdot a_n^2+\frac{1-(\Dp)^n}{2}\cdot b_n^2\\
&\leq \frac{(\la_1^n+\la_2^n)^2}{2(1+(\Dp)^n)}+\frac{(\la_1^n+\la_2^n)^2}{2(1-(\Dp)^n)}\\
&=\frac{(\la_1^n+\la_2^n)^2}{(1+(\Dp)^n)(1-(\Dp)^n)}\\
&\leq \frac{(\la_1^n+\la_2^n)^2}{(1+\Dp)(1-\Dp)}< \frac{\la_{max}^{2n}}{p(1-p)}.
\eal
\eeq
Since $n_\e$ is an integer, it follows from \eqref{eq:ineq-1.6}-\eqref{eq:ineq-1.7} that $\log_{\la_{max}^2}\big((1-\frac{\la_{min}}{\la_{max}})^{-2}\e\big)\leq n_\e\leq \log_{\la_{max}^2}(p(1-p)\e)+1$.
It implies $\lim\limits_{\e\to 0}\frac{n_\e}{\log_{\la_{max}^2}(\e)}=1$, and thus 
$n_\e=\log_{\la_{max}^2}(\e)+\bo\big(\log_{\la_{max}^2}(\e)\big)$.

(ii.c) Here we consider $\ket{\psi_0}=\sqrt{p}\ket{0}+\sqrt{1-p}\ket{1}$ and $\ket{\psi_1}=\sqrt{q}\ket{0}+e^{i\phi}\sqrt{1-q}\ket{1}$ with distinct $p,q$ and both of them in $(0,1)$. This also corresponds to Lemma \ref{pp:generaltwoqubit}. We shall keep considering the inequality \eqref{eq:gnericne}. Note that $p\neq q$ in this case. It follows that
\beq
\label{eq:ineq-2.1}
\bal
a_n&=\frac{\abs{(\mu_1+e^{-i\phi}\mu_2)^n+(\mu_1-e^{-i\phi}\mu_2)^n}}{\sqrt{(1+(\Dp)^n)(1+(\Dq)^n)}},\\
b_n&=\frac{\abs{(\mu_1+e^{-i\phi}\mu_2)^n-(\mu_1-e^{-i\phi}\mu_2)^n}}{\sqrt{(1-(\Dp)^n)(1-(\Dq)^n)}},
\eal
\eeq
where $\mu_1=\sqrt{pq}$ and $\mu_2=\sqrt{(1-p)(1-q)}$.
Denote the two modulus as 
\beq
\label{eq:ineq-2.2}
\bal
\la_1:=\abs{\mu_1+e^{-i\phi}\mu_2}=\sqrt{\mu_1^2+\mu_2^2+2\mu_1\mu_2\cos\phi},\\
\la_2:=\abs{\mu_1-e^{-i\phi}\mu_2}=\sqrt{\mu_1^2+\mu_2^2-2\mu_1\mu_2\cos\phi}.
\eal
\eeq

We first consider the case when $\cos\phi=0$. It implies that $\la_1=\la_2$. Let $\la\equiv\la_1=\la_2$. We write $\mu_1+e^{-i\phi}\mu_2=\la e^{i\t}$ and $\mu_1-e^{-i\phi}\mu_2=\la e^{-i\t}$. From Eq. \eqref{eq:ineq-2.1} we obtain
\beq
\label{eq:ineq-2.3-1}
\bal
a_n&=\frac{\abs{\la^ne^{in\t}+\la^n e^{-in\t}}}{\sqrt{(1+(\Dp)^n)(1+(\Dq)^n)}}\\
&=\frac{2\la^n\cos(n\t)}{\sqrt{(1+(\Dp)^n)(1+(\Dq)^n)}},\\
b_n&=\frac{\abs{\la^ne^{in\t}-\la^n e^{-in\t}}}{\sqrt{(1-(\Dp)^n)(1-(\Dq)^n)}}\\
&=\frac{2\la^n\sin(n\t)}{\sqrt{(1-(\Dp)^n)(1-(\Dq)^n)}}.
\eal
\eeq
On the one hand, we obtain that
\beq
\label{eq:ineq-2.3-2}
\bal
&~~~~\frac{1+(\Dp)^n}{2}\cdot a_n^2+\frac{1-(\Dp)^n}{2}\cdot b_n^2\\
&=\frac{2\la^{2n}\cos^2(n\t)}{1+(\Dq)^n}+\frac{2\la^{2n}\sin^2(n\t)}{1-(\Dq)^n}\geq\la^{2n}.
\eal
\eeq
On the other hand, we obtain that
\beq
\label{eq:ineq-2.3-3}
\bal
&~~~~\frac{1+(\Dp)^n}{2}\cdot a_n^2+\frac{1-(\Dp)^n}{2}\cdot b_n^2\\
&=\frac{2\la^{2n}\cos^2(n\t)}{1+(\Dq)^n}+\frac{2\la^{2n}\sin^2(n\t)}{1-(\Dq)^n}\leq\frac{\la^{2n}}{q(1-q)}.
\eal
\eeq
Since $n_\e$ is an integer, it follows that $\log_{\la^2}(\e)\leq n_{\e}\leq \log_{\la^2}(q(1-q)\e)+1$. It implies $\lim\limits_{\e\to 0}\frac{n_\e}{\log_{\la^2}(\e)}=1$, and thus $n_\e=\log_{\la^2}(\e)+\bf{o}(\log_{\la^2}(\e))$ as $\e\to 0$.

Second we consider the case when $\cos\phi\neq 0$. It implies $\la_1\neq \la_2$. We write $\mu_1+e^{-i\phi}\mu_2=\la_1e^{i\t_1}$ and $\mu_1-e^{-i\phi}\mu_2=\la_2e^{i\t_2}$. It follows from Eq. \eqref{eq:ineq-2.1} that 
\beq
\label{eq:ineq-2.3}
\bal
a_n&=\frac{\abs{\la_1^ne^{in\t_1}+\la_2^ne^{in\t_2}}}{\sqrt{(1+(\Dp)^n)(1+(\Dq)^n)}},\\
b_n&=\frac{\abs{\la_1^ne^{in\t_1}-\la_2^ne^{in\t_2}}}{\sqrt{(1-(\Dp)^n)(1-(\Dq)^n)}}.
\eal
\eeq
From the triangle inequality we obtain that
\beq
\label{eq:ineq-2.4}
\bal
\abs{\la_1^n-\la_2^n}\leq \sqrt{(1+(\Dp)^n)(1+(\Dq)^n)}\cdot a_n \leq \la_1^n+\la_2^n,\\
\abs{\la_1^n-\la_2^n}\leq \sqrt{(1-(\Dp)^n)(1-(\Dq)^n)}\cdot b_n \leq \la_1^n+\la_2^n.
\eal
\eeq
Let $\la_{max}:=\max\{\la_1,\la_2\}$ and $\la_{min}:=\min\{\la_1,\la_2\}$. It follows that
$\la_{max}=\sqrt{\mu_1^2+\mu_2^2+2\mu_1\mu_2\abs{\cos\phi}}$
and
$\la_{min}=\sqrt{\mu_1^2+\mu_2^2-2\mu_1\mu_2\abs{\cos\phi}}$.
On the one hand, we obtain that 
\beq
\label{eq:ineq-2.5}
\bal
&~~~~\frac{1+(\Dp)^n}{2}\cdot a_n^2+\frac{1-(\Dp)^n}{2}\cdot b_n^2\\
&\geq \frac{(\la_1^n-\la_2^n)^2}{2(1+(\Dq)^n)}+\frac{(\la_1^n-\la_2^n)^2}{2(1-(\Dq)^n)}\\
&=\frac{(\la_1^n-\la_2^n)^2}{(1+(\Dq)^n)(1-(\Dq)^n)}\\
&> (\la_1^n-\la_2^n)^2>(1-\frac{\la_{min}}{\la_{max}})^2\la_{max}^{2n}.
\eal
\eeq
On the other hand, we obtain that
\beq
\label{eq:ineq-2.6}
\bal
&~~~~\frac{1+(\Dp)^n}{2}\cdot a_n^2+\frac{1-(\Dp)^n}{2}\cdot b_n^2\\
&\leq \frac{(\la_1^n+\la_2^n)^2}{2(1+(\Dq)^n)}+\frac{(\la_1^n+\la_2^n)^2}{2(1-(\Dq)^n)}\\
&=\frac{(\la_1^n+\la_2^n)^2}{(1+(\Dq)^n)(1-(\Dq)^n)}\\
&\leq \frac{(\la_1^n+\la_2^n)^2}{(1+\Dq)(1-\Dq)}< \frac{\la_{max}^{2n}}{q(1-q)}.
\eal
\eeq
Since $n_\e$ is an integer, we conclude that $\log_{\la_{max}^2}\big((1-\frac{\la_{min}}{\la_{max}})^{-2}\e\big)\leq n_\e\leq \log_{\la_{max}^2}(q(1-q)\e)+1$.
It follows that $\lim\limits_{\e\to 0}\frac{n_\e}{\log_{\la_{max}^2}(\e)}=1$, and thus $n_\e=\log_{\la_{max}^2}(\e)+\bo\big(\log_{\la_{max}^2}(\e)\big)$.

This completes the proof.
\qed


\bibliography{qht}


\end{document}